\documentclass[12pt, a4paper]{article}
\usepackage[utf8]{inputenc}
	 \usepackage{amsmath,amsthm}
	 \usepackage{amsfonts, amssymb}
 \usepackage{float}
 \usepackage{url}
\usepackage{graphicx}
 \usepackage{xcolor}
\usepackage{enumerate}
\usepackage{enumitem}
\usepackage{array}
\usepackage[urlcolor=black,linkcolor=black,citecolor=black,colorlinks=true]{hyperref}
\usepackage[left=2.9cm, right=2.9cm, top=2.25cm, bottom=3.25cm]{geometry}
\usepackage{pict2e}
\usepackage{placeins}
\usepackage{tikz}
\usepackage{subcaption}
\usepackage{comment}

\numberwithin{equation}{section}

\newtheorem{theoreme}{Theorem}[section]
\newtheorem{lemme}[theoreme]{Lemma}

\newtheorem{proposition}[theoreme]{Proposition}
\newtheorem{corollaire}[theoreme]{Corollary}
\newtheorem{exemple}[theoreme]{Example}
\newtheorem{remarque}[theoreme]{Remark}
\newtheorem{hypothese}[theoreme]{Assumption}

\title{Continuous-time modeling and bootstrap for Schnieper's reserving}

\author{Nicolas Baradel\footnote{Inria, CMAP, CNRS, \'{E}cole polytechnique, Institut Polytechnique de Paris, 91200 Palaiseau, nicolas.baradel@polytechnique.edu.}}

\begin{document}

\maketitle

\begin{abstract}
We revisit Schnieper's model, which decomposes incurred but not reported (IBNR) reserves into two components: reserves for newly reported claims (true IBNR) and reserves for changes over time in the estimated cost of already reported claims (IBNER). We propose a continuous-time stochastic model for the aggregate claims process, driven by a random Poisson measure for the arrival of newly reported claims and by Brownian motion for the cost fluctuations of reported claims. This framework is consistent with the key assumptions of Schnieper's original approach. Within this setting, we develop a bootstrap method to estimate the full predictive distribution of claims reserves. Our approach naturally accounts for asymmetry, ensures non-negativity, and respects intrinsic bounds on reserves, without requiring additional assumptions. We illustrate the method through a case study and compare it with alternative stochastic techniques based on Schnieper's model.
\end{abstract}

\section{Introduction}

In the context of excess-of-loss reinsurance, \cite{schnieper1991separating} proposed a model that decomposes incurred but not reported (IBNR) reserves into two components: \textit{true IBNR}, representing claims that have not yet been reported, and \textit{IBNER} (incurred but not enough reported), capturing changes over time in the estimated ultimate cost of already reported claims.

\medbreak

Although \cite{mack1993distribution} incorporated some elements of Schnieper's approach, the model has received relatively little attention in the subsequent literature. Notable extensions include \cite{liu2009predictive}, which derives the conditional mean squared error of prediction (MSEP) as a measure of reserve uncertainty; \cite{liu2009bootstrap}, which develops a non-parametric bootstrap procedure to estimate the predictive distribution of reserves; and \cite{baradel2025modeling}, which studies and extends the special case of claim counts exceeding a priority threshold, originally introduced in \cite{schnieper1991separating}.

\medbreak

A parallel strand of research has explored continuous-time stochastic models for loss reserving. For instance, \cite{baradel2025ime} proposes a continuous-time formulation of Mack's model using stochastic differential equations driven by Brownian motion and develops a bootstrap method to estimate the conditional MSEP and quantiles. This framework provides a coherent parametric structure with independent increments and avoids the generation of negative values in cumulative processes during bootstrapping.

\medbreak

In this paper, we extend the same continuous-time philosophy to Schnieper's model. We model the \textit{true IBNR} component, which is non-decreasing, using a random Poisson measure to drive the arrival of newly reported claims, while the \textit{IBNER} component is driven by Brownian motion, following \cite{baradel2025ime}. The key advantage of this approach is that it enables parametric bootstrap simulation of the total claims reserves. This naturally incorporates asymmetry, enforces non-negativity, and respects intrinsic bounds on reserves, all within Schnieper's original decomposition and without requiring residual-based adjustments or additional assumptions.

\medbreak 

Other continuous-time frameworks for loss reserving have primarily focused on the chain-ladder method. For example, \cite{miranda2013continuous} develops a bivariate density approach with kernel-based estimation to model individual claim development within the triangle, while \cite{bischofberger2020continuous} employs marked point processes to capture the timing and magnitude of claim payments. In contrast, our work operates on aggregated data and introduces a continuous-time stochastic process that aligns with \cite{baradel2025ime} while faithfully respecting Schnieper's key assumptions.

\medbreak 

The remainder of the paper is organized as follows. Section 2 reviews Schnieper's general model and its key estimators. Section 3 introduces our continuous-time stochastic model for the aggregate claims process, derives its main properties, and establishes its consistency with Schnieper's framework. Section 4 presents a tailored bootstrap procedure that accounts for parameter uncertainty. Finally, Section 5 illustrates the method through a case study and compares it with alternative stochastic approaches based on Schnieper's model.

\section{Schnieper's model}

Schnieper's model \cite{schnieper1991separating} provides a probabilistic framework that separates two different behaviors in the IBNR data: 

\begin{itemize}
    \item The occurrence of newly reported claims, assumed to occur randomly in proportion to the exposure ;
    \item The development (re-estimation) of the ultimate cost of claims already reported in previous development periods.
\end{itemize}

\medbreak

The model introduces the two processes :
\begin{itemize}
    \item $(N_{i,j})_{1 \leq i, j \leq n}$ which represents the total amount of new excess claims, referring to claims that have not been recorded as excess claims in previous development years,
    \item $(D_{i,j})_{1 \leq i, j \leq n}$ which represents the decrease in the total claims amount between development years $j-1$ and $j$, concerning claims that were already known in development year $j-1$.
\end{itemize}

Note that $(D_{i,j})_{1 \leq i, j \leq n}$ may be negative (reflecting increases in estimated ultimate cost) and by definition, $D_{i, 1} = 0$ for all $1 \leq i \leq n$.

\bigskip

Given these processes, the cumulative incurred data $(C_{i,j})_{1 \leq i, j \leq n}$ are obtained recursively as follows:

\begin{equation}
    \begin{aligned}
    &C_{i, 1} = N_{i, 1}, & & 1 \leq i \leq n \\
    &C_{i,j+1} = C_{i,j} + N_{i,j+1} - D_{i,j+1}, & & 1 \leq i \leq n, \ 1 \leq j \leq n-1
    \end{aligned}
\end{equation}
We also assume the exposures $(E_i)_{1 \leq i \leq n}$ are known and non-negative. For each accident year $1 \leq i \leq n$, define the filtration :
    \[
        \mathcal{F}^i_k := \sigma\left(N_{i, j}, D_{i, j}, j \leq k\right), \ \ 1 \leq k \leq n.
    \]
Schnieper's general model relies on the following assumptions:
    \begin{hypothese}\label{H_schnieper} 
        \leavevmode\begin{itemize}
            \item[H1] The random variables $(N_{i_{1}, j}, D_{i_{1}, j})_{1 \leq j \leq n}$ and $(N_{i_{2}, j}, D_{i_{2}, j})_{1 \leq j \leq n}$ are independent for $i_{1} \not= i_{2}$. 
            \item[H2] For $1 \leq j \leq n$, there exists $\Lambda_{j} \geq 0$ and, for $1 \leq j \leq n - 1$, there exists $\Delta_j \leq 1$ such that
                \[
                    \begin{aligned}
                    \mathbb{E}(N_{i, j} \mid \mathcal{F}_{j}^i) &= \Lambda_{j}E_{i}, \quad &1 \leq i \leq n,\\
                    \mathbb{E}(D_{i, j+1} \mid \mathcal{F}_{j}^i) &= \Delta_{j}C_{i,j}, \quad &1 \leq i \leq n.
                    \end{aligned}
                \]
            \item[H3] For $1 \leq j \leq n$, there exists $\Sigma_{j} \geq 0$ and, for $1 \leq j \leq n - 1$, there exists $T_{j} \geq 0$ such that
                \[
                    \begin{aligned}
                    Var(N_{i, j} \mid \mathcal{F}_{j}^i) &= \Sigma_{j}^{2}E_{i}, \quad &1 \leq i \leq n,\\
                    Var(D_{i, j+1} \mid \mathcal{F}_{j}^i) &= T_{j}^{2}C_{i,j}, \quad &1 \leq i \leq n.
                    \end{aligned}
                \]
        \end{itemize}
    \end{hypothese}
    
Assuming, in addition, that $N_{i,j}$ and $D_{i,j}$ are independent for all $1 \leq i,j \leq n$, we can derive explicit expressions for the conditional mean and variance of $C_{i,j}$ given $\mathcal{F}_s^i$:
\begin{lemme}\label{schniepercor}
For all  $1 \leq i \leq n$ and $s \leq j \leq n$,
\[
        \begin{aligned}
            &\mathbb{E}(C_{i,j} \mid \mathcal{F}_{s}^i) = \left(\prod_{k = s}^{j-1}(1-\Delta_{k})\right)C_{i,s} + E_{i}\sum_{k = s}^{j}\Lambda_{k}\left(\prod_{\ell = k}^{j-1}(1-\Delta_{\ell})\right),\\
            &Var(C_{i,j} \mid \mathcal{F}_{s}^i) = \\
            &\sum_{k=s}^{j-1}\left[\left(\prod_{\ell=k+1}^{j-1}(1-\Delta_\ell)^2\right)\left[\Sigma_{k+1}^{2}E_i + T_{k}^{2}\left(\left[\prod_{\ell=s}^{k-1}(1-\Delta_\ell)\right]C_{i, s} + E_i\sum_{\ell=s}^{k}\Lambda_\ell\prod_{m = \ell}^{k-1}(1-\Delta_m)\right)\right]\right].
        \end{aligned}
        \]
    \end{lemme}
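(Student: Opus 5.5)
The plan is to argue by induction on $j \geq s$ for a fixed accident year $i$, using the tower property for the mean and the law of total variance for the variance. Write $m_j := \mathbb{E}(C_{i,j} \mid \mathcal{F}_s^i)$ and $v_j := Var(C_{i,j} \mid \mathcal{F}_s^i)$. The base case $j = s$ is immediate: $C_{i,s}$ is $\mathcal{F}_s^i$-measurable, the variance sum is empty, and $v_s = 0$. Assumption H1 is not needed, since everything concerns a single accident year.

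First I derive one-step recursions from the decomposition $C_{i,j+1} = C_{i,j} + N_{i,j+1} - D_{i,j+1}$. I read H2--H3 as giving the conditional moments of the next-period increments given $\mathcal{F}_j^i$, with the indexing matched to the statement: new claims entering at step $j+1$ carry mean $\Lambda E_i$ and variance $\Sigma_{j+1}^2 E_i$, and $D_{i,j+1}$ carries mean $\Delta_j C_{i,j}$ and variance $T_j^2 C_{i,j}$.

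For the mean, conditioning on $\mathcal{F}_j^i$ gives $\mathbb{E}(C_{i,j+1} \mid \mathcal{F}_j^i) = (1-\Delta_j) C_{i,j} + \Lambda E_i$. Taking $\mathbb{E}(\cdot \mid \mathcal{F}_s^i)$, which is legitimate because $\mathcal{F}_s^i \subset \mathcal{F}_j^i$, yields the affine recursion
\[
m_{j+1} = (1-\Delta_j)\, m_j + \Lambda E_i .
\]

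For the variance, I use
\[
Var(C_{i,j+1} \mid \mathcal{F}_s^i) = \mathbb{E}\big(Var(C_{i,j+1} \mid \mathcal{F}_j^i) \mid \mathcal{F}_s^i\big) + Var\big(\mathbb{E}(C_{i,j+1} \mid \mathcal{F}_j^i) \mid \mathcal{F}_s^i\big).
\]
Given $\mathcal{F}_j^i$, the term $C_{i,j}$ is constant, and the assumed independence of $N$ and $D$ kills the covariance term. Hence $Var(C_{i,j+1} \mid \mathcal{F}_j^i) = \Sigma_{j+1}^2 E_i + T_j^2 C_{i,j}$, whose $\mathcal{F}_s^i$-conditional expectation is $\Sigma_{j+1}^2 E_i + T_j^2 m_j$. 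The second term equals $(1-\Delta_j)^2 v_j$, because the added constant $\Lambda E_i$ does not contribute. This gives
\[
v_{j+1} = (1-\Delta_j)^2 v_j + \Sigma_{j+1}^2 E_i + T_j^2 m_j .
\]

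Next I unroll both linear recursions. The mean recursion telescopes to the product $\prod_{k=s}^{j-1}(1-\Delta_k)$ times $C_{i,s}$, plus a sum over entry times $k$ of $\Lambda_k E_i$ discounted by $\prod_{\ell=k}^{j-1}(1-\Delta_\ell)$. This is the first formula, and I would confirm it by a one-line induction.

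Then I substitute this closed form for $m_k$ into the variance recursion. Unrolling $v_j = \sum_{k=s}^{j-1} \big(\prod_{\ell=k+1}^{j-1}(1-\Delta_\ell)^2\big)\big(\Sigma_{k+1}^2 E_i + T_k^2 m_k\big)$ reproduces exactly the bracketed expression in the statement. In particular, the inner term $\big[\prod_{\ell=s}^{k-1}(1-\Delta_\ell)\big] C_{i,s} + E_i \sum_{\ell=s}^{k} \Lambda_\ell \prod_{m=\ell}^{k-1}(1-\Delta_m)$ is just $m_k$.

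The computations themselves are routine. I expect the main obstacle to be the index bookkeeping. H2 is written with $\mathcal{F}_j^i$ conditioning $N_{i,j}$ at the same index, and the summation range $k = s,\dots,j$ in the mean formula includes a $\Lambda_s$ term even though $N_{i,s}$ is already observed. So the first thing I would do is fix the convention for which $\Lambda$ index attaches to the claims entering between $j$ and $j+1$, and choose it so that it agrees with the stated product ranges. I would check it on the case $j = s+1$, where both sides can be computed by hand, before running the general induction.
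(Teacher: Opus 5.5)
The paper states this lemma without proof, and your argument (tower property for the mean, conditional law of total variance, unrolling two linear recursions) is the natural one. Fix the convention $\mathbb{E}(N_{i,j+1}\mid\mathcal{F}_j^i)=\Lambda_{j+1}E_i$, $Var(N_{i,j+1}\mid\mathcal{F}_j^i)=\Sigma_{j+1}^2E_i$. This is forced by the pairing of $\Sigma_{k+1}^2$ with $T_k^2$ in the variance formula, and it matches the corollary that defines $\Lambda_{j+1}$ and $\Sigma_{j+1}^2$ from the interval $[j,j+1]$. With it, your recursions are correct:
\[
m_{j+1}=(1-\Delta_j)\,m_j+\Lambda_{j+1}E_i,\qquad v_{j+1}=(1-\Delta_j)^2v_j+\Sigma_{j+1}^2E_i+T_j^2m_j,\qquad m_s=C_{i,s},\ v_s=0.
\]
The gap is in your claim that the base case is ``immediate'' and that the unrolled recursions reproduce the displayed formulas. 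They do not. At $j=s$ the left side of the mean formula is $C_{i,s}$, but the right side is $C_{i,s}+E_i\Lambda_s$, because the $k=s$ term of $\sum_{k=s}^{j}$ carries an empty product. Likewise, the inner bracket of the variance formula at $k=s$ equals $C_{i,s}+E_i\Lambda_s$, not $m_s=C_{i,s}$. You noticed the suspicious $\Lambda_s$, but your plan to choose the indexing of H2 ``so that it agrees with the stated product ranges'' cannot work. The case $j=s$ fails whatever H2 says. At $j=s+1$ the stated right side contains $E_i\bigl(\Lambda_s(1-\Delta_s)+\Lambda_{s+1}\bigr)$, which matches neither the $\Lambda_{j+1}$ nor the $\Lambda_j$ convention.

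What your recursions actually prove is the corrected lemma, with $\sum_{k=s+1}^{j}$ in the mean and $\sum_{\ell=s+1}^{k}$ in the inner term of the variance. A one-line induction confirms $m_j=\prod_{k=s}^{j-1}(1-\Delta_k)C_{i,s}+E_i\sum_{k=s+1}^{j}\Lambda_k\prod_{\ell=k}^{j-1}(1-\Delta_\ell)$. This is also the version the paper itself uses later: in $\widehat{C}_{i,n}$ one has $s=n+1-i$, and the sum starts at $k=n+2-i=s+1$. So you should say explicitly that the lemma as written has an off-by-one error, and prove the corrected form.

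A smaller point: removing the cross term needs $Cov(N_{i,j+1},D_{i,j+1}\mid\mathcal{F}_j^i)=0$. The paper's extra independence assumption must therefore be read conditionally on $\mathcal{F}_j^i$; unconditional independence of $N_{i,j+1}$ and $D_{i,j+1}$ does not imply it.
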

Schnieper provides the following unbiased estimators for the development parameters and variances:
\begin{equation}\label{est_fs}
    \begin{aligned}
    \widehat{\Lambda}_{j} &:= \frac{\sum_{i = 1}^{n - j+1}N_{i, j}}{\sum_{i = 1}^{n - j+1}E_{i}},  &1 \leq j \leq n,\\
    \widehat{\Delta}_{j} &:= \frac{\sum_{i = 1}^{n - j}D_{i, j+1}}{\sum_{i = 1}^{n - j}C_{i,j}},  &1 \leq j \leq n-1,\\
    \widehat{\Sigma}_{j}^2 &:= \frac{1}{n-j}\sum_{i = 1}^{n-j+1}E_{i}\left(\frac{N_{i,j}}{E_{i}} - \widehat{\Lambda}_{j}\right)^{2}, &1 \leq j \leq n-1,\\
    \widehat{T}_{j}^2 &:= \frac{1}{n-j-1}\sum_{i = 1}^{n-j}C_{i,j}\left(\frac{D_{i,j+1}}{C_{i,j}} - \widehat{\Delta}_{j}\right)^{2}, &1 \leq j \leq n-2.
    \end{aligned}
\end{equation}
For $j = n-1$, Schnieper sets $\widehat{\Sigma}_{n-1}^2 = \widehat{T}_{n-1}^2 = 0$.

Analogous to Mack's chain-ladder method, we can deduce an unbiased estimator for the ultimate value by observing that the estimators in the products above are conditionally uncorrelated:
\begin{equation*}
	\widehat{C}_{i,n} := \left(\prod_{k = n+1-i}^{n-1}(1-\widehat{\Delta}_{k})\right)C_{i,n+1-i} + E_{i}\sum_{k = n+2-i}^{n}\widehat{\Lambda}_{k}\left(\prod_{\ell = k}^{n-1}(1-\widehat{\Delta}_{\ell})\right), \quad 2 \leq i \leq n,
\end{equation*}
The corresponding estimator of the total outstanding reserve is then
\begin{equation*}
	\widehat{R} := \sum_{i=2}^{n} \widehat{C}_{i,n} - C_{i, n-i+1}.
\end{equation*}

Furthermore, \cite{liu2009predictive} derives an explicit formula for the MSEP under Schnieper's model.

\section{A continuous-time model}

Let $(\Omega, \mathcal{F}, \mathbb{P})$ be a probability space sufficiently rich to support an $n$-dimensional Brownian motion $W = (W^i)_{1 \leq i \leq n}$ and  $n$ independent Poisson random measures $\Pi^i(dt, dz)$ on $[0, n] \times \mathbb{R}_{+}^{*}$, each with intensity measure $\lambda_{t}E_i \, dt \otimes \nu(dz)$, where $\nu$ is a finite measure on $\mathbb{R}_{+}^{*}$ satisfying $\int_{\mathbb{R}_{+}^{*}} z^{2} \, \nu(dz) < \infty$ and $(E_i)_{1 \leq i \leq n}$ are the exposures.

\medbreak

We define the following filtrations, which represent the information available at development time $t$ for accident year $i$:
    \[
        \mathcal{F}_{t}^i := \sigma(\Pi^{i}([0, s], \cdot), \ W_{s}^i, \ s \leq t), \quad 1 \leq i \leq n, \ 0 \leq t \leq n.
    \]
We define the filtration of the entire knowledge at time $t \in [0, 2n]$.
    \[
        \mathcal{F}_{t} := \sigma(\Pi^{i}([0, s], \cdot), \ W_{s}^i, \ i+s \leq t+1, s \leq n), \quad 0 \leq t \leq 2n.
    \]
For each $1 \leq i \leq n$, let $(C_{t}^{i})_{0 \leq t \leq n}$ denote the cumulative incurred claims process for accident year $i$.
\begin{hypothese}
        \leavevmode
	There exist three bounded measurable functions $\lambda : [0, n] \rightarrow \mathbb{R}_+$, $\delta : [0, n] \rightarrow \mathbb{R}$ and $\tau : [0, n] \rightarrow \mathbb{R}_{+}$ such that, for all $1 \leq i \leq n$, $(C_{t}^i)_{0 \leq t \leq n}$ is the unique strong solution of the stochastic differential equation:
\begin{equation} \label{EDS_C}  
    C_{t}^i = C_{s}^i + \int_{s}^{t}\int_{\mathbb{R}_+^*} z \Pi^i(du, dz) - \int_{s}^{t} \delta_{u} C_u^idu + \int_{s}^{t} \tau_{u} \sqrt{C_u^i}dW_u^i, \quad 0 \leq s \leq t \leq n,
\end{equation}
with initial condition $C_{0}^i = 0$. Moreover, 
\begin{equation}\label{EDS_L2}
	\mathbb{E}\left[\sup_{0 \leq t \leq n} (C_{t}^i)^2\right] < +\infty, \quad 1 \leq i \leq n.
\end{equation}
\end{hypothese}
The jump term in \eqref{EDS_C} can be written explicitly as, for all $1 \leq i \leq n$,
\[
    \int_{s}^{t}\int_{\mathbb{R}_+^*} z \Pi^i(du, dz) = \sum_{\ell=\Pi_{s}^i+1}^{\Pi_t^i}Z_{\ell}^{i}, \quad 0 \leq s \leq t \leq n,
\]
where $(\Pi^i_t)_{t \geq 0} := (\int_{0}^{t}\int_{\mathbb{R}_+^*}\Pi^i(du, dz))_{t \geq 0}$ is an inhomogeneous Poisson process with intensity $\lambda_{t}E_i$ and the sizes $(Z_{k}^i)_{k\geq 0} \overset{i.i.d.}{\sim} \mathbb{P}_{Z}$ having finite second moment (independent of the Poisson process).

\medbreak

The stochastic differential equation \eqref{EDS_C} admits a unique non-negative strong solution because on each inter-jump interval, as established in, for instance, \cite{yamada1971uniqueness} or \cite[Theorem 4.6.11]{meleard2016modeles}, and the global solution is then constructed iteratively by piecing these segments together at the jump times. For \eqref{EDS_L2}, the result is known in the fully Lipschitz case, however, the same standard proof still works here (from \eqref{EDS_C}, use of Jensen and Doob's maximal inequality, and we conclude with Gronwall lemma).

\medbreak

Note that \eqref{EDS_C} involves only three time-dependent coefficients, with no analogue to a volatility term $\Sigma$. We now derive the first two conditional moments of the processes $C^i$ in order to verify consistency with Assumption~\ref{H_schnieper}.

\begin{proposition}\label{Cmoments}
The first two conditional moments of the processes $(C_{t}^i)_{0 \leq t \leq n}$ are, for all $0 \leq s \leq t \leq n$ and $1 \leq i \leq n$,
\[
    \begin{aligned}
        \mathbb{E}\left(C_{t}^i \mid \mathcal{F}_s^i\right) = \, &C_{s}^i e^{\int_{s}^{t} -\delta_u du } + \mathbb{E}(Z)E_i\int_{s}^{t}\lambda_ue^{\int_{u}^{t} -\delta_v dv }du,\\
        Var\left(C_{t}^i \mid \mathcal{F}_s^i\right) = \, &C_s^i\int_{s}^{t}\tau_u^2e^{-\int_{s}^u \delta_z dz - \int_{u}^t 2\delta_z dz}du \\
    &+ \mathbb{E}(Z)E_i\int_{s}^{t}\tau_u^2e^{-\int_{u}^t 2\delta_v dv}\int_{s}^{u}\lambda_v e^{\int_{v}^{u}-\delta_y dy}dvdu \\
    &+ \mathbb{E}(Z^2)E_i\int_{s}^{t}\lambda_ue^{-2\int_{u}^{t}\delta_v dv}du.
    \end{aligned}
\]
\end{proposition}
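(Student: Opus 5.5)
Fix $i$ and $s$, write $m_t := \mathbb{E}(C_t^i \mid \mathcal{F}_s^i)$ for $t \in [s,n]$, and derive linear ODEs for the first and second conditional moments. Everything is then solved by variation of constants.

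\textbf{First moment.} Take conditional expectations in \eqref{EDS_C}.

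The stochastic integral $\int_s^t \tau_u\sqrt{C_u^i}\,dW_u^i$ is a true martingale. Indeed, $\tau$ is bounded and \eqref{EDS_L2} gives $\mathbb{E}\int_0^n C_u^i\,du<\infty$, so the integrand is in $L^2$ and its conditional expectation vanishes.

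For the jump term, write $\Pi^i = \tilde\Pi^i + \lambda_u E_i\,du\,\nu(dz)$. The compensated integral is a square-integrable martingale since $\int z^2\nu(dz)<\infty$. This gives the contribution $\mathbb{E}(Z)E_i\int_s^t\lambda_u\,du$. Here I identify $\mathbb{E}(Z)$ with $\int z\,\nu(dz)$ up to the normalisation $\nu(\mathbb{R}_+^*)$, which I take to be absorbed into $\lambda$, as the paper's notation implicitly does.

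Fubini, justified by \eqref{EDS_L2}, then yields
\[
m_t = C_s^i + \mathbb{E}(Z)E_i\int_s^t\lambda_u\,du - \int_s^t\delta_u m_u\,du .
\]
This linear ODE has the solution $m_t = C_s^i e^{-\int_s^t\delta} + \mathbb{E}(Z)E_i\int_s^t \lambda_u e^{-\int_u^t\delta}\,du$, which is the first formula. Applying the same argument from an intermediate time $u$ also gives $\mathbb{E}(C_u^i\mid\mathcal{F}_s^i)$ for every $u\in[s,t]$; this is needed below.

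\textbf{Second moment.} Apply Itô's formula for jump diffusions to $(C_t^i)^2$ on $[s,t]$:
\[
d(C_t^i)^2 = \bigl(-2\delta_t (C_t^i)^2 + \tau_t^2 C_t^i\bigr)dt + 2C_{t}^i\tau_t\sqrt{C_t^i}\,dW_t^i + \int\bigl((C_{t-}^i+z)^2-(C_{t-}^i)^2\bigr)\Pi^i(dt,dz).
\]
Compensating the jump integral produces the drift term $\lambda_tE_i\bigl(2C_t^i\,\mathbb{E}(Z) + \mathbb{E}(Z^2)\bigr)dt$.

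The main obstacle is justifying that the local martingale parts have zero conditional expectation. The Brownian term has integrand $2\tau (C^i)^{3/2}$, whose square requires a third moment that \eqref{EDS_L2} does not provide. I would handle this by localisation: use stopping times $\rho_k=\inf\{t: C_t^i\ge k\}$, take expectations of the stopped identity, and pass to the limit $k\to\infty$. The limit is justified by monotone or dominated convergence, using $\sup_t (C_t^i)^2\in L^1$ for the $(C^i)^2$ terms and $L^1$ bounds for the drift terms. The compensated jump integral is handled the same way, using $\int z^2\nu<\infty$.

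With $q_t := \mathbb{E}((C_t^i)^2\mid\mathcal{F}_s^i)$, this gives
\[
q_t' = -2\delta_t q_t + \tau_t^2 m_t + 2\mathbb{E}(Z)E_i\lambda_t m_t + \mathbb{E}(Z^2)E_i\lambda_t .
\]

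\textbf{Variance.} Set $v_t := q_t - m_t^2$. From the first-moment ODE, $(m_t^2)' = -2\delta_t m_t^2 + 2\mathbb{E}(Z)E_i\lambda_t m_t$. The cross terms cancel, so
\[
v_t' = -2\delta_t v_t + \tau_t^2 m_t + \mathbb{E}(Z^2)E_i\lambda_t, \qquad v_s=0 .
\]
Variation of constants gives $v_t = \int_s^t e^{-2\int_u^t\delta}\bigl(\tau_u^2 m_u + \mathbb{E}(Z^2)E_i\lambda_u\bigr)du$.

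Substituting the explicit $m_u$ from the first step yields exactly three terms:
\begin{itemize}
\item the $C_s^i$ term, with kernel $e^{-\int_s^u\delta - 2\int_u^t\delta}$;
\item the double integral involving $\mathbb{E}(Z)$;
\item the $\mathbb{E}(Z^2)$ term.
\end{itemize}
These match the stated formula.
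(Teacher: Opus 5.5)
Your proof is correct and follows the paper's route. You take conditional expectations in the SDE to get a linear ODE for the mean, apply It\^{o}'s formula with localization and dominated convergence (via $\mathbb{E}[\sup_t (C_t^i)^2]<\infty$) to obtain the second-moment equation, then form the variance equation, in which the cross terms cancel, and solve it by variation of constants. Your stopping times $\rho_k=\inf\{t: C_t^i\ge k\}$ are slightly more careful than the paper's $\inf\{t: C_t=m\}$, since a jump can overshoot the level $m$ without ever hitting it.
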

\begin{proof}
Fix $i \in \{1, \ldots, n\}$. For convenience, denote $C^i$ as $C$, $\mathcal{F}^i$ for $\mathcal{F}$, $\Pi^i$ as $\Pi$, and $W^i$ as $W$ throughout this proof.\\
\textbf{1.} Applying the expected value operator $\mathbb{E}$ to (\ref{EDS_C}) and utilizing \eqref{EDS_L2} for the local martingale yields:
\[
    \mathbb{E}\left(C_{t} \mid \mathcal{F}_s\right) = C_{s} + \mathbb{E}(Z)E_i\int_{s}^{t}\lambda_udu - \int_{s}^{t} \delta_{u} \mathbb{E}\left(C_{u} \mid \mathcal{F}_s\right)du. 
\]
This forms a simple linear inhomogeneous ordinary differential equation with the unique solution:
\begin{equation}\label{esperance}
    \mathbb{E}\left(C_{t} \mid \mathcal{F}_s\right) = C_{s} e^{\int_{s}^{t} -\delta_u du } + \mathbb{E}(Z)E_i\int_{s}^{t}\lambda_ue^{\int_{u}^{t} -\delta_v dv }du.
\end{equation}
\textbf{2.} It\^{o}'s formula gives:
\[\begin{aligned}
    C_{t}^2 &= C_{s}^2 + 2\int_{s}^{t}  -\delta_u C_u^2du + 2\int_{s}^{t}  \tau_u C_u\sqrt{C_u}dW_u + \int_{s}^{t} \tau_u^2 C_udu,\\
    &+ \int_{s}^{t}\int_{\mathbb{R}_+^{*}}\left(2C_{u-}z + z^2\right)\Pi(du, dz)
    \end{aligned}
\]
Introducing the stopping times $T_{m} := \inf\{t \geq s: C_{t} = m\}$, which tends to infinity a.s. as $m \rightarrow +\infty$, we apply the expected value operator:

\begin{equation}\label{EC2_TM}
    \begin{aligned}        
        \mathbb{E}\left(C_{t}^2 \mid \mathcal{F}_s\right) &= C_{s}^2 - 2\mathbb{E}\left(\int_{s}^{t\wedge T_m}  \delta_u C_u^2  du \mid \mathcal{F}_s\right) + \mathbb{E}\left(\int_{s}^{t\wedge T_m} \tau_u^2 C_u du\mid \mathcal{F}_s\right) \\
        &+ \mathbb{E}\left(\int_{s}^{t\wedge T_m} 2C_{u-} z + z^2 \Pi(du, dz)\mid \mathcal{F}_s\right)
    \end{aligned}
\end{equation}

Taking the limit as $m\rightarrow +\infty$, and using \eqref{EDS_L2} along with the dominated convergence theorem, we obtain:

\begin{equation}\label{EC2}
    \begin{aligned}        
        \mathbb{E}\left(C_{t}^2 \mid \mathcal{F}_s\right) &= C_{s}^2 - 2\int_{s}^{t}  \delta_u \mathbb{E}\left(C_u^2 \mid \mathcal{F}_s\right) du + \int_{s}^{t} \tau_u^2 \mathbb{E}\left(C_u \mid \mathcal{F}_s\right) du \\
        &+ 2 \mathbb{E}(Z)E_i \int_{s}^{t}  \lambda_u \mathbb{E}\left(C_u \mid \mathcal{F}_s\right)du + \mathbb{E}(Z^2)E_i\int_{s}^{t}\lambda_udu
    \end{aligned}
\end{equation}

On the other hand, $t \mapsto\mathbb{E}\left(C_{t} \mid \mathcal{F}_s\right)^2$ satisfies the following differential equation: 
\[
    \mathbb{E}\left(C_{t} \mid \mathcal{F}_s\right)^2 = C_{s}^2 - 2\int_{s}^{t} \delta_{u} \mathbb{E}\left(C_{u} \mid \mathcal{F}_s\right)^2du + 2\mathbb{E}(Z)E_i \int_{s}^{t} \lambda_u \mathbb{E}\left(C_{u} \mid \mathcal{F}_s\right)du.
\]
Combining it with (\ref{EC2}) and (\ref{esperance}) gives:
\[
    Var\left(C_{t} \mid \mathcal{F}_s\right) =  -2\int_{s}^{t}  \delta_u Var(C_u \mid \mathcal{F}_s)du + \int_{s}^{t} \tau_u^2 \mathbb{E}\left(C_u \mid \mathcal{F}_s\right) du  + \mathbb{E}(Z^2)E_i\int_{s}^{t}\lambda_udu.
\]
\[
    \begin{aligned}
    Var\left(C_{t} \mid \mathcal{F}_s\right) &=  -2\int_{s}^{t}  \delta_u Var(C_u \mid \mathcal{F}_s)du + C_{s} \int_{s}^{t} \tau_u^2 e^{\int_{s}^{u} -\delta_v dv } du \\
    &\mathbb{E}(Z)E_i\int_{s}^{t} \tau_u^2 \int_{s}^{u}\lambda_ve^{\int_{v}^{u} -\delta_y dy }dv du + \mathbb{E}(Z^2)E_i\int_{s}^{t}\lambda_udu.
    \end{aligned}
\]
It is a linear non-homogeneous ordinary differential equation whose unique solution is:
\[
    \begin{aligned}
    Var\left(C_{t} \mid \mathcal{F}_s\right) = \, &C_s\int_{s}^{t}\tau_u^2e^{-\int_{s}^u \delta_v dv - \int_{u}^t 2\delta_v dv}du \\
    &+ \mathbb{E}(Z)E_i\int_{s}^{t}\tau_u^2e^{-\int_{u}^t 2\delta_v dv}\int_{s}^{u}\lambda_v e^{\int_{v}^{u}-\delta_y dy}dvdu \\
    &+ \mathbb{E}(Z^2)E_i\int_{s}^{t}\lambda_ue^{-\int_{u}^{t}2\delta_v dv}du.
    \end{aligned}
\]

\end{proof}
We have so far defined the cumulative claims process $(C_t^i)_{0 \leq t \leq n}$. However, to connect our continuous-time model to Schnieper's discrete framework, we must explicitly identify the increments corresponding to newly reported claims ($N_{i,j}$) and development changes on reported claims ($D_{i,j}$). Note that, in our setup, newly reported claims are immediately subject to the same continuous cost evolution (drift and diffusion) as previously reported ones.

For any square-integrable random variable $X^{\circ}\in\mathcal{F}_{s}^{i}$ with $0 \leq s \leq n$, we define $C_{t}^{i, \circ}[s, X^{\circ}]$ as the unique solution to the stochastic differential equation :

\begin{equation}\label{Ccirc} 
    C_{t}^{i, \circ}[s, X^{\circ}] = X^{\circ} - \int_{s}^{t} \delta_{u} C_u^{i, \circ}du + \int_{s}^{t} \tau_{u} \sqrt{C_u^{i, \circ}}dW_u^i, \quad s \leq t \leq n,
\end{equation}
which is the jump-free version of the original dynamics \eqref{EDS_C}.

\begin{lemme}\label{cND}
    Define the sequences $(N_{i,j}, D_{i,j})_{1 \leq i, j \leq n}$ by
    \[
        D_{j+1}^{i} := C_{j}^i - C_{j+1}^{i, \circ}[j, C_{j}^i], \quad N_{j+1}^{i} := C_{j+1}^{i} - C_{j+1}^{i, \circ}[j, C_{j}^i].
    \]
This yields the decomposition:
	\[
		C_{j+1}^{i} = C_{j}^{i} + N_{j+1}^{i} - D_{j+1}^{i}
	\]
The two first conditional moments are as follows.\\
For the $D$-component: 
         \[
    \begin{aligned}
        \mathbb{E}\left(D_{j+1}^i \mid \mathcal{F}_j^i\right) = \, &C_{j}^i \left(1-e^{\int_{j}^{j+1} -\delta_u du }\right),\\
        Var\left(D_{j+1}^i \mid \mathcal{F}_j^i\right) = \, &C_j^i\int_{j}^{j+1}\tau_u^2e^{-\int_{j}^u \delta_z dz - \int_{u}^{j+1} 2\delta_z dz}du.
    \end{aligned}
\]
For the $N$-component: 
         \[
    \begin{aligned}
        \mathbb{E}\left(N_{j+1}^i \mid \mathcal{F}_j^i\right) = \, &\mathbb{E}(Z)E_i\int_{j}^{j+1}\lambda_ue^{\int_{u}^{j+1} -\delta_v dv }du,\\
        Var\left(N_{j+1}^i \mid \mathcal{F}_j^i\right) = \,
    &\mathbb{E}(Z)E_i\int_{j}^{j+1}\tau_u^2e^{-\int_{u}^{j+1} 2\delta_v dv}\int_{j}^{u}\lambda_v e^{\int_{v}^{u}-\delta_y dy}dvdu \\
    &+ \mathbb{E}(Z^2)E_i\int_{j}^{j+1}\lambda_ue^{-2\int_{u}^{j+1}\delta_v dv}du.
    \end{aligned}
\].
 \end{lemme}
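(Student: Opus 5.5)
The decomposition identity is immediate from the definitions. Subtracting the two defining relations gives $N^i_{j+1}-D^i_{j+1} = C^i_{j+1} - C^i_j$, which is exactly $C^i_{j+1}=C^i_j+N^i_{j+1}-D^i_{j+1}$.

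\textbf{Moments of $D$.} The process $C^{i,\circ}[j,C^i_j]$ solves \eqref{EDS_C} with $\Pi^i\equiv 0$ on $[j,j+1]$, starting from $C^i_j$. I will repeat the proof of Proposition~\ref{Cmoments} with $\lambda\equiv 0$, i.e.\ take the expectation for the mean and apply Itô's formula with the localization $T_m$ for the second moment. The square integrability needed for the localization comes from the same Gronwall argument used for \eqref{EDS_L2}, since $X^\circ=C^i_j$ is square integrable. This gives
\[
\mathbb{E}\big(C^{i,\circ}_{j+1}\mid\mathcal{F}^i_j\big)=C^i_je^{-\int_j^{j+1}\delta_u du},
\]
and for the conditional variance the first term of Proposition~\ref{Cmoments} only. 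Since $D^i_{j+1}=C^i_j-C^{i,\circ}_{j+1}$ with $C^i_j$ being $\mathcal{F}^i_j$-measurable, the mean of $D$ is $C^i_j(1-e^{-\int\delta})$ and its variance equals that of $C^{i,\circ}_{j+1}$, as claimed.

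\textbf{Moments of $N$.} I write $N^i_{j+1}=C^i_{j+1}-C^{i,\circ}_{j+1}$ and set $Y_t:=C^i_t-C^{i,\circ}_t$ on $[j,j+1]$, with $Y_j=0$. The mean is simply the difference of the mean in Proposition~\ref{Cmoments} (with $s=j,t=j+1$) and the mean of $C^{\circ}$ computed above, which leaves the $\lambda$-term.

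The variance is the main obstacle, because $C$ and $C^\circ$ are driven by the same Brownian motion and are correlated. I would compute the conditional covariance $\mathrm{Cov}(C_{j+1},C^\circ_{j+1}\mid\mathcal{F}_j)$ by applying Itô's product rule to $C_tC^\circ_t$:
\[
d(CC^\circ)=C_{t-}\,dC^\circ+C^\circ dC+\tau_t^2\sqrt{C_tC^\circ_t}\,dt .
\]
The jumps only enter through $C^\circ_{t}\,z\,\Pi(dt,dz)$. The cross-variation term $\sqrt{C_tC^\circ_t}$ is not linear, so the moments do not close exactly. The target formula says $\mathrm{Var}(N)=\mathrm{Var}(C)-\mathrm{Var}(C^\circ)$, i.e.\ that $\mathrm{Cov}(N,C^\circ)=0$ given $\mathcal{F}_j$. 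I would therefore aim to prove this uncorrelatedness directly.

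My first attempt is a superposition/independence argument: interpret $N$ as the contribution of claims arriving in $(j,j+1]$, each evolving with its own noise, as in the paper's remark that new claims follow the same dynamics. Via the square-root (CIR-type) branching/additivity property, $C$ decomposes in law into the sum of independent pieces, namely $C^\circ$ and a copy started from $0$ with jumps. Under this coupling the variance is additive, and Proposition~\ref{Cmoments} with $C_s=0$ gives exactly the stated two terms.

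If the pathwise definition does not justify this coupling, I would fall back on the moment ODE for $\mathbb{E}(N\mid\mathcal{F}_j)$ and its second moment. I would accept the variance of $N$ as $\mathrm{Var}(C)-\mathrm{Var}(C^\circ)$ read off from Proposition~\ref{Cmoments}: the $C_s$-term cancels, and what remains is the two displayed terms.
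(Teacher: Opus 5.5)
Your decomposition, the $D$-moments and the mean of $N^i_{j+1}$ are fine. The variance of $N^i_{j+1}$, however, is not established, and it cannot be established as the lemma is written.

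Your superposition argument computes the variance of $\sum_\ell S^{i,\ell}_{j+1}$, built with Brownian motions independent of the one driving $S^{i,0}$. That is not the $N^i_{j+1}$ of the lemma. There, writing $C^{i,\circ}_t$ for $C^{i,\circ}_t[j,C^i_j]$, the process $C^{i,\circ}$ is driven by the same $W^i$ as $C^i$, and the paper relies on this coupling, e.g.\ for $C^i_t\ge C^{i,\circ}_t$ in Lemma~\ref{PC0}. Lemma~\ref{lemme_S} only identifies the law of $C^i_t$. It says nothing about the joint law of $(C^i_{j+1},C^{i,\circ}_{j+1})$, on which $\mathrm{Var}(N^i_{j+1}\mid\mathcal{F}^i_j)$ depends. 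Your fallback of accepting $\mathrm{Var}(C)-\mathrm{Var}(C^\circ)$ is just the claim itself.

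Worse, the It\^o product computation you started decides the question the other way. Set $Y_t:=C^i_t-C^{i,\circ}_t$, so $Y_j=0$. Subtract $\frac{d}{dt}\big[\mathbb{E}(Y_t\mid\mathcal{F}^i_j)\,\mathbb{E}(C^{i,\circ}_t\mid\mathcal{F}^i_j)\big]$ from $\frac{d}{dt}\mathbb{E}(Y_tC^{i,\circ}_t\mid\mathcal{F}^i_j)$. The jump terms cancel, leaving the damping $-2\delta_t\,\mathrm{Cov}$ and the bracket $d[Y,C^{i,\circ}]_t=\tau_t^2\big(\sqrt{C^i_tC^{i,\circ}_t}-C^{i,\circ}_t\big)dt$. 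This gives
\[
\mathrm{Cov}\big(N^i_{j+1},C^{i,\circ}_{j+1}\mid\mathcal{F}^i_j\big)=\int_j^{j+1}\tau_u^2\,e^{-2\int_u^{j+1}\delta_v\,dv}\;\mathbb{E}\Big(\sqrt{C^i_u\,C^{i,\circ}_u}-C^{i,\circ}_u\;\Big|\;\mathcal{F}^i_j\Big)\,du\;\ge\;0,
\]
since $C^i_u\ge C^{i,\circ}_u\ge 0$. The covariance is strictly positive as soon as $C^i_j>0$ and $\lambda,\tau$ do not vanish on $[j,j+1)$: after a jump, $C^i>C^{i,\circ}>0$ on a time interval with positive probability. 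Hence, for the pathwise $N$ of the lemma, $\mathrm{Var}(N^i_{j+1}\mid\mathcal{F}^i_j)$ equals the stated expression minus $2\,\mathrm{Cov}(N^i_{j+1},C^{i,\circ}_{j+1}\mid\mathcal{F}^i_j)$, which is strictly smaller.

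The paper's own one-line proof (``$N^i_{j+1}=C^i_{j+1}-C^i_j+D^i_{j+1}$, together with Proposition~\ref{Cmoments}'') silently makes the same assumption, namely $\mathrm{Cov}(N^i_{j+1},D^i_{j+1}\mid\mathcal{F}^i_j)=0$. The law identity for $N$ in Remark~\ref{rND} inherits the problem. Your write-up at least isolates the issue correctly.

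The variance formula becomes true if $(N^i_{j+1},D^i_{j+1})$ are \emph{defined} through the branching construction of Lemma~\ref{lemme_S}: set $D^i_{j+1}:=C^i_j-S^{i,0}_{j+1}[j,C^i_j]$ and $N^i_{j+1}:=\sum_{\ell}S^{i,\ell}_{j+1}[\tau_\ell,Z^i_{\Pi^i_j+\ell}]$ with mutually independent Brownian motions. This reproduces $C^i_{j+1}$ in law, and then $N$ and $D$ are conditionally independent given $\mathcal{F}^i_j$. In that setting your computation via Proposition~\ref{Cmoments} with $C_s=0$ is exactly the right proof.
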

 \begin{proof}
     The conditional moments of $D_{j+1}^{i}$ conditional on $\mathcal{F}_{j}^{i}$ follow directly from \cite[Proposition 3.6]{baradel2025ime}, applied to the pure-diffusion process starting at $C_j^i$. For $N_{j+1}^i$, note that
\[
N_{j+1}^i = C_{j+1}^i - C_j^i + D_{j+1}^i,
\]
thus together with Proposition \ref{Cmoments}, we get the two first moments of $N_{j+1}^{i}$ condition on $\mathcal{F}_{j}^{i}$.
 \end{proof}

\begin{corollaire}\label{C_param}
The processes $(C_{t}^i)_{0 \leq t \leq n}$, $N_{i,j}$, $D_{i,j}$ for $1 \leq i \leq n$ satisfy \textit{H1}, \textit{H2} and \textit{H3} of Assumption \ref{H_schnieper} by setting, for $1 \leq j \leq n-1$:
\[
    \begin{aligned}
    1-\Delta_j &:= e^{\int_{j}^{j+1}-\delta_udu},\\
    \Lambda_{j+1} &:= \mathbb{E}(Z)\int_{j}^{j+1}\lambda_ue^{\int_{u}^{j+1} -\delta_v dv }du,\\
    T_j^2 &:= \int_{j}^{j+1}\tau_u^2e^{-\int_{j}^u \delta_z dz - \int_{u}^{j+1}2\delta_z dz}du.\\
    \Sigma_{j+1}^2 &:= \mathbb{E}(Z)\int_{j}^{j+1}\tau_u^2e^{-\int_{u}^{j+1} 2\delta_v dv}\int_{j}^{u}\lambda_v e^{\int_{v}^{u}-\delta_y dy}dvdu \\
    &+ \mathbb{E}(Z^2)\int_{j}^{j+1}\lambda_ue^{-\int_{u}^{j+1}2\delta_v dv}du.
    \end{aligned}
\]
\end{corollaire}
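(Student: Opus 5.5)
The plan is to read off H2 and H3 from Lemma~\ref{cND}, and to obtain H1 from the independence of the driving noises across accident years. Throughout, I identify $N_{i,j}:=N^i_j$, $D_{i,j}:=D^i_j$ and $C_{i,j}:=C^i_j$ for integer development times. I also set $N_{i,1}:=C^i_1$, so that the recursion $C_{i,1}=N_{i,1}$ holds, together with $D_{i,1}=0$.

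\textbf{H1.} For each $i$, the solution $C^i$ of \eqref{EDS_C} is a strong solution, hence adapted to the filtration generated by $(\Pi^i,W^i)$. The jump-free process $C^{i,\circ}[j,C^i_j]$ of \eqref{Ccirc} is likewise a measurable functional of $C^i_j$ and $(W^i_u-W^i_j)_{u\ge j}$. Therefore the whole family $(N_{i,j},D_{i,j})_{1\le j\le n}$ is measurable with respect to $\sigma(\Pi^i,W^i)$. The components $W^i$ of the $n$-dimensional Brownian motion are independent, and the measures $\Pi^i$ are independent by construction. I will assume, as the construction of the probability space implicitly requires, that the family $(\Pi^i)_i$ is independent of $W$. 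Under that assumption the $\sigma$-fields $\sigma(\Pi^{i},W^{i})$ are mutually independent across $i$, which gives H1.

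\textbf{H2 and H3.} Lemma~\ref{cND} gives
\[
\mathbb{E}(D^i_{j+1}\mid\mathcal F^i_j)=C^i_j\bigl(1-e^{-\int_j^{j+1}\delta_u du}\bigr)=\Delta_jC_{i,j},
\]
and $\mathrm{Var}(D^i_{j+1}\mid\mathcal F^i_j)=T_j^2C_{i,j}$ with $T_j$ exactly as defined in the statement. For the $N$-component, the lemma gives the conditional mean and variance as $E_i$ times the expressions defining $\Lambda_{j+1}$ and $\Sigma_{j+1}^2$. This is the form H2--H3 require, once the conditioning is read in the Schnieper indexing. There $\mathbb{E}(N_{i,j+1}\mid\cdot)$ is conditioned on information up to the previous development year, which in continuous time is $\mathcal F^i_j$. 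The case $j+1=1$ is handled by Proposition~\ref{Cmoments} with $s=0$ and $C_0=0$. This yields $\Lambda_1=\mathbb{E}(Z)\int_0^1\lambda_ue^{-\int_u^1\delta}du$ and the analogous $\Sigma_1^2$, the same formulas with $j=0$.

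\textbf{Sign constraints.} The parameters must lie in the ranges Assumption~\ref{H_schnieper} requires. $\Lambda_j\ge0$ because $\lambda\ge0$ and $\mathbb{E}(Z)>0$. $\Delta_j=1-e^{-\int\delta}\le 1$ holds trivially, with no sign restriction on $\delta$ needed. $T_j^2\ge0$ and $\Sigma_j^2\ge0$ hold as integrals of non-negative integrands.

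\textbf{Main obstacle.} The main difficulty is bookkeeping rather than analysis: the conditioning $\sigma$-fields must match. The continuous filtration $\mathcal F^i_j$ generated by $(\Pi^i,W^i)$ up to time $j$ contains Schnieper's $\sigma(N_{i,k},D_{i,k},k\le j)$, which is generated by discrete functionals of those paths. I will pass to the smaller $\sigma$-field by the tower property. The point is that the conditional moments computed in Lemma~\ref{cND} are functions of $C_{i,j}$ only, and $C_{i,j}$ is measurable with respect to the discrete $\sigma$-field. For the variance, I will also check that conditioning on the smaller $\sigma$-field does not add a term of the form $\mathrm{Var}(\mathbb{E}(\cdot\mid\mathcal F^i_j)\mid\cdot)$. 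This holds because the inner conditional expectation is already measurable with respect to the smaller $\sigma$-field.
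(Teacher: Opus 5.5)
Your proposal follows the paper's own (implicit) argument: the corollary is Lemma~\ref{cND} read in Schnieper's indexing, with Proposition~\ref{Cmoments} at $s=0$ supplying $\Lambda_1$ and $\Sigma_1^2$, plus independence of $(\Pi^i,W^i)$ across $i$ for H1, and your extra bookkeeping (the tower-property passage to Schnieper's smaller $\sigma$-fields and the sign checks) is sound. The one ingredient you, like the paper, take on trust is the $N$-variance in Lemma~\ref{cND}, which tacitly needs $\mathrm{Cov}(N^i_{j+1},D^i_{j+1}\mid\mathcal F^i_j)=0$: this holds for the independent-branch construction of Lemma~\ref{lemme_S} and Remark~\ref{rND}, but if $C^{i,\circ}$ is driven by the same $W^i$ as $C^i$, the covariation $d\langle C^i-C^{i,\circ},C^{i,\circ}\rangle_t=\tau_t^2\bigl(\sqrt{C^i_t}-\sqrt{C^{i,\circ}_t}\bigr)\sqrt{C^{i,\circ}_t}\,dt\geq 0$ pushes $\mathrm{Var}(N^i_{j+1}\mid\mathcal F^i_j)$ strictly below $\Sigma_{j+1}^2E_i$ whenever $C^i_j>0$ (for non-degenerate $\tau,\lambda$), so H3 for $N$ should be read for the branching coupling, with the conditioning enlarged by the branch noises, after which your tower-property step goes through unchanged.
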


We now derive a useful distributional representation for $C$ (and hence also for $N$ and $D$). We first recall the branching property associated with \eqref{Ccirc}.

\begin{remarque}
The solution to \eqref{Ccirc} satisfies the branching property: if $C'^{i, \circ}[s, X'^{\circ}]$ is an independent copy of the process satisfying \eqref{Ccirc}, then the sum $C^{i, \circ}[s, X^{\circ}] + C'^{i, \circ}[s, X'^{\circ}]$ also satisfies \eqref{Ccirc} with initial value $X^{\circ} + X'^{\circ}$ at time $s$ (driven by a different Brownian motion); see \cite[Lemma 3.3]{baradel2025ime}.
\end{remarque}

The key idea is as follows: at the $k$-th jump time $\tau_k$, the process jumps by $C_{\tau_k} = C_{\tau_k-} + Z_k$. Exploiting the branching property, we can decompose the post-jump process into the continuation of the pre-jump part (following \eqref{Ccirc}) and a new independent branch starting from $Z_k$ at time $\tau_k$ and evolving according to \eqref{Ccirc}. The next lemma formalizes this representation.

\begin{lemme}\label{lemme_S}
Let $0 \leq s < \tau_{\ell}^i \leq t$ denote the jump times of $\Pi^i$ on $[s, t]$, then
    \[
        C_{t}^{i} \mid \sigma(\mathcal{F}_{s}^{i}, \sigma(\Pi^i([s, t], \cdot))) \overset{law}{=} S^{i, 0}_t[s, C_s^i] + \sum_{\ell=1}^{\Pi_{t}^i-\Pi_{s}^i}S_{t}^{i, \ell}[\tau_{\ell}, Z_{\Pi_s^i + \ell}^i]
    \]
    where the processes $S^{i, \ell}$ are independent copies of the solution to \eqref{Ccirc}, each driven by its own independent Brownian motion.
\end{lemme}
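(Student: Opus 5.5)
We proceed by induction on the number of jumps of $\Pi^i$ in $[s,t]$, using the branching property (Remark, \cite[Lemma 3.3]{baradel2025ime}) on each inter-jump interval. Throughout, we condition on $\mathcal{G}:=\sigma(\mathcal{F}_s^i,\sigma(\Pi^i([s,t],\cdot)))$. Given $\mathcal{G}$, the number of jumps $K:=\Pi^i_t-\Pi^i_s$, the jump times $s<\tau_1<\dots<\tau_K\le t$ and the marks $Z_{\Pi^i_s+\ell}^i$ are fixed. Because $\Pi^i$ is independent of $W^i$, the increments $(W^i_u-W^i_s)_{u\ge s}$ are still a Brownian motion under this conditioning, independent of $\mathcal{F}_s^i$. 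So on each interval $[\tau_{\ell},\tau_{\ell+1})$ (with $\tau_0:=s$, $\tau_{K+1}:=t$), $C^i$ solves the jump-free equation \eqref{Ccirc}. Pathwise uniqueness (Yamada--Watanabe, as invoked after \eqref{EDS_C}) then identifies $C^i$ on that interval with $C^{i,\circ}[\tau_\ell, C^i_{\tau_\ell}]$, where $C^i_{\tau_\ell}=C^i_{\tau_\ell-}+Z^i_{\Pi^i_s+\ell}$.

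The induction statement is: for each $m\in\{0,\dots,K\}$ and $u\in[\tau_m,\tau_{m+1}]$,
\[
C^i_u\mid\mathcal{G}\overset{law}{=} S^{i,0}_u[s,C^i_s]+\sum_{\ell=1}^{m}S^{i,\ell}_u[\tau_\ell,Z^i_{\Pi^i_s+\ell}],
\]
with independent copies $S^{i,\ell}$, and this holds jointly as processes in $u$.

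\textbf{Base case ($m=0$).} This is just the pathwise identification above on $[s,\tau_1)$. At $\tau_1$ the process jumps to $C^i_{\tau_1-}+Z_1$.

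\textbf{Inductive step.} Suppose the representation holds in law, as a process, on $[s,\tau_m]$. After the jump at $\tau_m$, $C^i$ follows \eqref{Ccirc} started from $C^i_{\tau_m-}+Z_m$. We realise this on an enlarged probability space. Each existing branch $S^{i,\ell}$, $\ell<m$, is continued past $\tau_m$ with its own Brownian motion; by the Markov property each continuation is a solution of \eqref{Ccirc} started at its value at $\tau_m$. We then start a fresh independent branch $S^{i,m}[\tau_m,Z_m]$. Applying the branching property iteratively (finitely many summands), the sum of these independent solutions solves \eqref{Ccirc} from initial value $\sum_{\ell<m}S^{i,\ell}_{\tau_m}+Z_m$, driven by some Brownian motion $\widetilde W$. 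Its law as a process therefore coincides with the law of $C^{i,\circ}[\tau_m,C^i_{\tau_m-}+Z_m]$, using uniqueness in law (a consequence of pathwise uniqueness) together with the equality in law of the initial values from the induction hypothesis. Gluing the two pieces through the strong Markov property at $\tau_m$ gives the representation on $[\tau_m,\tau_{m+1}]$. Taking $u=t$, $m=K$ yields the lemma.

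\textbf{Main obstacle.} The delicate point is measure-theoretic: making precise that, conditionally on $\mathcal{G}$ (which fixes random jump times), $W^i$ remains a Brownian motion independent of the conditioning, and that branching, which is a statement about laws at deterministic starting times, can be applied at the now deterministic times $\tau_\ell$. The plan is to first work with deterministic times $t_1<\dots<t_K$ and deterministic marks, prove the identity for the SDE with jumps at those prescribed times, and then use the independence of $\Pi^i$ and $W^i$ to pass to a regular conditional distribution given $\mathcal{G}$. Finally, we must check that the enlarged-space construction preserves independence of $S^{i,0}$ from $\mathcal{F}_s^i$ beyond its initial value $C_s^i$.
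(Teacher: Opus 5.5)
Your proposal is correct and follows the same route as the paper. Like the paper, you condition on your $\mathcal{G}$ so that the jump times and marks are frozen and independent of the Brownian increments, induct on the number of jumps of $\Pi^i$ in $[s,t]$, and at each $\tau_\ell$ use the branching property to split the post-jump dynamics into the continued existing branches plus a fresh independent branch started from $Z^i_{\Pi^i_s+\ell}$. Your extra ingredients (pathwise identification with $C^{i,\circ}$ between jumps, uniqueness in law, Markov gluing, and reduction to deterministic jump times via a regular conditional distribution) make rigorous what the paper asserts informally; the one fix is to state the induction on half-open intervals $[\tau_m,\tau_{m+1})$, since at $u=\tau_{m+1}$ the process has already jumped by $Z^i_{\Pi^i_s+m+1}$.
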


\begin{proof}
 All statements are understood conditionally on $\sigma(\mathcal{F}_{j}^{i}, \sigma(\Pi^i([s, t], \cdot)))$. Under this conditioning, the jump times $\tau_\ell$ and sizes $Z_{\Pi_s^i + \ell}$ for $1 \leq \ell \leq \Pi_t^i - \Pi_s^i$ are fixed and known on $[s, t]$, , and they are independent of the Brownian increments $W_u^i - W_s^i$ for $u \geq s$.

    \smallbreak
    
    \noindent Since $C^{i}$ solves \eqref{EDS_C}, we have at each jump time $\tau_{\ell}$:
    \[
    C_{\tau_{\ell}}^i = C_{\tau_{\ell}-}^i + Z_{\Pi_{s}^{i} + \ell}^i.
    \]
    Set $\tau_0 := s$. We proceed by induction on $0 \leq k \leq \Pi_t^i - \Pi_s^i$. 

    \smallbreak
    
    \noindent \textbf{1.} Base case ($k=0$): For $s \leq t < \tau_1$, there are no jumps in $[s, t]$, so $C_t^i$ evolves according to \eqref{Ccirc} starting from $C_s^i$ at time $s$. Thus,
    \[
     C^i_{t} = S^{i, 0}_{t}[s, C_s^i].
    \]
    \textbf{2.} Inductive hypothesis: Assume that for some $k\geq 1$,        
        \[
        C^i_{t} \overset{law}{=} S^{i, 0}_{t}[s, C_s^i] + \sum_{\ell=1}^{k-1}S_{t}^{i, \ell}[\tau_{\ell}, Z_{\Pi_s^i + \ell}^i],\quad \tau_{k-1} \leq t < \tau_{k}.
        \]
    where the $S^{i, \ell}$ for $\ell \geq 0$ are independent copies satisfying \eqref{Ccirc}. In particular, at the end of this interval,
    \[
    C^i_{\tau_{k}-} \overset{law}{=} S^{i, 0}_{\tau_{k}-}[s, C_s^i] + \sum_{\ell=1}^{k-1}S_{\tau_{k}-}^{i, \ell}[\tau_{\ell}, Z_{\Pi_s^i + \ell}^i],
    \]
    \textbf{3.} Inductive step: Now consider $\tau_k \leq t < \tau_{k+1}$. At $\tau_k$, we add the jump: $C_{\tau_k}^i = C_{\tau_k-}^i + Z_{\Pi_s^i + k}$. From $\tau_k$ onward, the process evolves continuously according to \eqref{Ccirc} until the next jump. Note that at time $\tau_k$, the new branch $S_{\tau_{k}}^{i, k}[\tau_{k}, Z_{\Pi_s^i + k}^i] = Z_{\Pi_{s}^{i} + k}^i$. By the inductive hypothesis, the independence of the branches, and the branching property applied to the sum at $\tau_k-$ plus the new jump, the evolution from $\tau_k$ to $t$ of $C_t^i$ has the same conditional law as the sum of the continued evolutions of the existing branches plus the new independent branch starting at $Z_{\Pi_{s}^{i} + k}^i$:
    \[
        C^i_{t} \overset{law}{=} S^{i, 0}_{t}[s, C_s^i] + \sum_{\ell=1}^{k}S_{t}^{i, \ell}[\tau_{\ell}, Z_{\Pi_s^i + \ell}^i], \quad \tau_{k} \leq t < \tau_{k+1}.
    \]
This completes the induction.
\end{proof}

\begin{remarque}\label{rND}
Conditioning on $\mathcal{F}_j^i$ and using the definitions of $N$ and $D$ from Lemma \ref{cND}, we have the following equalities in law:
	\[
    \begin{aligned}
        N_{i, j+1} &\overset{law}{=} \sum_{\ell=1}^{\Pi_{j+1}^i-\Pi_{j}^i}S_{j+1}^{i, \ell}[\tau_{\ell}, Z_{\Pi_j^i + \ell}^i],\\
        D_{i, j+1} &\overset{law}{=} C_{i,j} - S^{i, 0}_{j+1}[j, C_j^i].
    \end{aligned}
    \]

\end{remarque}

Seasonal effects may be present within each year of development. Moreover, there is no need to track the functions $t \mapsto (\lambda_t, \delta_t, \tau_t)$ continuously over time. To simplify the model while retaining sufficient flexibility, we introduce the following piecewise-constant assumption.

    \begin{hypothese}\label{H_constant}
The functions $\lambda$, $\delta$, and $\tau$ are constant on each $[t, t+1)$, i.e., for $1 \leq t < n$:
    \[
            \lambda_t := \sum_{j=0}^{n-1}\lambda_{j+1}\mathbf{1}_{[j, j+1)}(t), \quad \delta_t := \sum_{j=0}^{n-1}\delta_j\mathbf{1}_{[j, j+1)}(t), \quad \tau_t := \sum_{j=0}^{n-1}\tau_j\mathbf{1}_{[j, j+1)}(t).
    \]
    \end{hypothese}

Under this assumption, the parametric relations from Corollary \ref{C_param} simplify considerably.

\begin{lemme}\label{small}
    Under Assumption \ref{H_constant}, the relation in Corollary \ref{C_param} for $0 \leq j \leq n-1$ simplifies to
    \[
        \begin{aligned}
            1-\Delta_j &= e^{-\delta_j} \\
            \Lambda_{j+1} &= \frac{\lambda_{j+1}\mathbb{E}(Z)}{\delta_j}\left(1-e^{-\delta_j}\right) \\
            T^2_j &= \frac{\tau_j^2}{\delta_j}\left(e^{-\delta_j} - e^{-2\delta_j}\right) \\
            \Sigma^2_{j+1} &= \frac{{\tau_j^2}\lambda_{j+1}\mathbb{E}(Z)(1-e^{-\delta_j})^2}{2\delta_j^2} + \frac{\lambda_{j+1}\mathbb{E}(Z^2)(1-e^{-2\delta_j})}{2\delta_j}
        \end{aligned} \Longleftrightarrow
        \begin{aligned}
            \delta_j &= -\log(1-\Delta_j) \\
            \lambda_{j+1} &= \frac{-\Lambda_{j+1}\log(1-\Delta_j)}{\mathbb{E}(Z)\Delta_j} \\
            \tau^2_j &= \frac{T_j^2 \log(1-\Delta_j)}{\Delta_j(\Delta_j-1)} \\
            \Sigma^2_{j+1} &= A_j
            + B_j\frac{\mathbb{E}(Z^2)}{\mathbb{E}(Z)}
        \end{aligned}
    \]
where 
\[
A_j := \frac{T_j^2\Lambda_{j+1}}{2(1-\Delta_j)}, \quad B_j := \frac{\Lambda_{j+1}(2-\Delta_j)}{2}.
\]
and with $\Delta_0 = T_0 = 0$ (hence $\delta_0 = \tau_0 = 0$).
\end{lemme}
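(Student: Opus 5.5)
We substitute the piecewise-constant coefficients of Assumption~\ref{H_constant} into the integrals of Corollary~\ref{C_param}. On $[j,j+1)$ we have $\lambda_u=\lambda_{j+1}$, $\delta_u=\delta_j$ and $\tau_u=\tau_j$. We then evaluate each integral explicitly with the substitution $x=u-j\in[0,1]$. Finally we invert the resulting system, taking the four parameters in the order $\delta_j$, $\lambda_{j+1}$, $\tau_j^2$, $\Sigma^2_{j+1}$.

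\textbf{Forward direction.} The first relation is immediate: $1-\Delta_j=e^{-\delta_j}$.

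For $\Lambda_{j+1}$ we get $\mathbb{E}(Z)\lambda_{j+1}\int_0^1 e^{-\delta_j(1-x)}dx=\lambda_{j+1}\mathbb{E}(Z)(1-e^{-\delta_j})/\delta_j$.

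For $T_j^2$ the exponent is $-\delta_j x-2\delta_j(1-x)=-2\delta_j+\delta_j x$. Hence $T_j^2=\tau_j^2e^{-2\delta_j}(e^{\delta_j}-1)/\delta_j=\tau_j^2(e^{-\delta_j}-e^{-2\delta_j})/\delta_j$.

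For $\Sigma_{j+1}^2$ we treat the two terms separately.
\begin{itemize}
\item The inner integral is $\lambda_{j+1}(1-e^{-\delta_j x})/\delta_j$. The outer integral is then $\mathbb{E}(Z)\tau_j^2\lambda_{j+1}\delta_j^{-1}\int_0^1 e^{-2\delta_j(1-x)}(1-e^{-\delta_j x})\,dx$. We have
\[
\int_0^1 e^{-2\delta_j(1-x)}\,dx=\frac{1-e^{-2\delta_j}}{2\delta_j},\qquad \int_0^1 e^{-2\delta_j+\delta_j x}\,dx=\frac{e^{-\delta_j}-e^{-2\delta_j}}{\delta_j}.
\]
Their difference equals $(1-2e^{-\delta_j}+e^{-2\delta_j})/(2\delta_j)=(1-e^{-\delta_j})^2/(2\delta_j)$, which gives the first summand.
\item The second term is $\mathbb{E}(Z^2)\lambda_{j+1}(1-e^{-2\delta_j})/(2\delta_j)$.
\end{itemize}

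\textbf{Inversion.} Put $q=1-\Delta_j=e^{-\delta_j}$, so that $\delta_j=-\log(1-\Delta_j)$ and $1-e^{-\delta_j}=\Delta_j$. Then:
\begin{itemize}
\item $\lambda_{j+1}=\Lambda_{j+1}\delta_j/(\mathbb{E}(Z)\Delta_j)$, which is the stated formula.
\item $e^{-\delta_j}-e^{-2\delta_j}=q(1-q)=(1-\Delta_j)\Delta_j$, so $\tau_j^2=T_j^2\delta_j/(\Delta_j(1-\Delta_j))=T_j^2\log(1-\Delta_j)/(\Delta_j(\Delta_j-1))$.
\item In $\Sigma^2_{j+1}$, replace $\tau_j^2\lambda_{j+1}\mathbb{E}(Z)$ by $T_j^2\delta_j/(\Delta_j(1-\Delta_j))\cdot\Lambda_{j+1}\delta_j/\Delta_j$. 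Multiplying by $\Delta_j^2/(2\delta_j^2)$ yields $T_j^2\Lambda_{j+1}/(2(1-\Delta_j))=A_j$.
\item For the second summand, use $1-e^{-2\delta_j}=\Delta_j(2-\Delta_j)$ and $\lambda_{j+1}/\delta_j=\Lambda_{j+1}/(\mathbb{E}(Z)\Delta_j)$. This gives $\mathbb{E}(Z^2)\Lambda_{j+1}(2-\Delta_j)/(2\mathbb{E}(Z))=B_j\,\mathbb{E}(Z^2)/\mathbb{E}(Z)$.
\end{itemize}
The equivalence holds because $\delta\mapsto 1-e^{-\delta}$ is a bijection from $\mathbb{R}$ onto $(-\infty,1)$, and each remaining relation is linear in the unknown ($\lambda_{j+1}$ or $\tau_j^2$) with a nonzero coefficient whenever $\Delta_j\neq0$.

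\textbf{Degenerate cases.} This is the main obstacle. When $\delta_j=0$, equivalently $\Delta_j=0$, the formulas are read by continuity: $(1-e^{-\delta})/\delta\to1$ and $\log(1-\Delta)/\Delta\to-1$. A direct computation with $\delta_j=0$ gives the same limits, so the formulas hold with this convention. This covers $j=0$. There $C_0^i=0$ and the first interval carries no prior development, so the convention $\Delta_0=T_0=0$ (hence $\delta_0=\tau_0=0$) is consistent. It gives $\Lambda_1=\lambda_1\mathbb{E}(Z)$ and $\Sigma_1^2=\lambda_1\mathbb{E}(Z^2)$, which the limits reproduce.

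Apart from this, the main risk is bookkeeping of the exponents in the double integral for $\Sigma^2_{j+1}$. I would verify that step by checking the $\delta_j\to0$ limit, where the first summand must become $\tau_j^2\lambda_{j+1}\mathbb{E}(Z)/2$.
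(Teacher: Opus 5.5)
Your proof is correct and takes the same route as the paper, which substitutes the piecewise-constant coefficients of Assumption \ref{H_constant} into the integrals of Corollary \ref{C_param} and evaluates them; you also spell out the inversion and the $\delta_j=0$ limiting case, which the paper leaves implicit. One small remark: $\delta_0=\tau_0=0$ is an imposed convention (the data have $D_{i,1}=0$, so $\Delta_0$ and $T_0$ are not identifiable), not something forced by $C_0^i=0$, because claims reported during $[0,1)$ would otherwise still evolve under $\delta_0,\tau_0$.
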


\begin{proof}
The proof proceeds directly from Corollary \ref{C_param} by evaluating a series of straightforward integrals.
\end{proof}

\begin{remarque}\label{X_ratio}
For the \textit{true IBNR} part, the parameters $(\lambda_j)_{1 \leq j \leq n}$, $\mathbb{E}(Z)$, and $\mathbb{E}(Z^2)$ determine both the $\Lambda$'s and the $\Sigma$'s. In particular, the second-moment quantities $\Sigma_j^2$ admit the linear structure :
    \begin{equation}\label{reg}
         \Sigma^2_j = A_j + B_j \cdot X, \quad 1 \leq j \leq n-1,
    \end{equation}
where $A_j$ and $B_j$ are as defined in Lemma \ref{small} and $X := \frac{\mathbb{E}(Z^2)}{\mathbb{E}(Z)}$. Thus, $X$ can be estimated with a simple linear regression (with no constant), which fixes one degree of freedom for the distribution of $Z$, but does not impose it. Thus, since \eqref{reg} can be rewritten simply as:
    \[
         \Sigma^2_j - A_j =  X \cdot B_j , \quad 1 \leq j \leq n-1,
    \]
$X$ can be estimated with a simple linear regression with no intercept. Since $\Sigma_j$ is typically estimated using $n-j$ observations (degrees of freedom), we perform weighted least squares with weights $w_j = n - j$.
	\[
		\widehat{X} := \frac{\sum_{j=1}^{n-1} (n-j) B_j (\Sigma^2_j - A_j)}{\sum_{j=1}^{n-1} (n-j) B_j^2 }.
	\]
The variance of $Z$ can then be recovered as
	\[
		Var(Z) = \mathbb{E}(Z)(X - \mathbb{E}(Z)),
	\]
which implies, in particular, that
	\[
		0 < \mathbb{E}(Z) < X.
	\]
\end{remarque}

Note that in the stochastic differential equation \eqref{EDS_C}, as $C_t^i$ approaches zero, both the drift coefficient and the diffusion coefficient vanish.  In the absence of the $N$ component (i.e., no new claims arrivals) the process $C_{i,t}$ can be solved explicitly on each interval $j \leq t \leq j+1$ conditional on $\mathcal{F}_j^i$, for any realized value of $C_{i,j}$.

\begin{lemme}\label{PC0}
For $j \leq t \leq j+1$, under Assumption \ref{H_constant}, the following bounds hold:
	\[
		\exp\left(-\lambda_{j+1} E_i (t-j)-\frac{2\delta_{j}e^{-\delta_{j}(t-j)}C_j^i}{\tau_j^2\left(1 - e^{-\delta_{j}(t-j)}\right)}\right) \leq \mathbb{P}(C_{t}^i = 0 \mid \mathcal{F}_{j}^i) \leq \mathbb{P}(D_{t}^i = C_{j}^{i} \mid \mathcal{F}_{j}^i),
	\]
where the random variable $D_{t}^i$ is defined in Remark \ref{rND} and
\[
	\mathbb{P}(D_{t}^i = C_{j}^{i} \mid \mathcal{F}_{j}^i) = \exp\left(-\frac{2\delta_{j}e^{-\delta_{j}(t-j)}C_j^i}{\tau_j^2\left(1 - e^{-\delta_{j}(t-j)}\right)}\right).
\]
This shows that the process $(C_{t}^i)_{t \geq 1}$ can reach 0 (and stay at zero until the next IBNR claim occurs). In practice, however, this probability is often numerically very close to zero. Moreover, conditional on being positive, the distribution of $C_{t}^{i}$ is continuous.
\end{lemme}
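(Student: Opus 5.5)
We use the branching representation of Lemma \ref{lemme_S} on $[j,t]$ and the explicit extinction probability of the jump-free square-root (Feller) diffusion \eqref{Ccirc} with constant coefficients $\delta_j,\tau_j$ (Assumption \ref{H_constant}).

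\textbf{Step 1: extinction of the jump-free branch.} For $S^{i,0}$ solving \eqref{Ccirc} from $x=C_j^i$ at time $j$, we compute the Laplace transform $\mathbb{E}(e^{-\theta S_t^{i,0}}\mid\mathcal{F}_j^i)=\exp(-x\,\psi_\theta(t-j))$. Here $\psi$ solves the Riccati equation $\psi'=-\delta_j\psi-\tfrac{\tau_j^2}{2}\psi^2$ with $\psi(0)=\theta$. This follows from Itô's formula applied to $\exp(-\psi(t-u)S_u)$, which is a bounded martingale. The Riccati equation is of Bernoulli type, and solving it gives
\[
\psi_\theta(h)=\frac{\delta_j e^{-\delta_j h}}{\delta_j/\theta+\frac{\tau_j^2}{2}(1-e^{-\delta_j h})}.
\]
Letting $\theta\to\infty$ gives $\mathbb{P}(S_t^{i,0}=0\mid\mathcal{F}_j^i)=\exp\bigl(-\tfrac{2\delta_j e^{-\delta_j(t-j)}C_j^i}{\tau_j^2(1-e^{-\delta_j(t-j)})}\bigr)$. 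By Remark \ref{rND}, $D_t^i=C_j^i-S_t^{i,0}[j,C_j^i]$, so this is exactly $\mathbb{P}(D_t^i=C_j^i\mid\mathcal{F}_j^i)$, which proves the displayed identity. When $\delta_j=0$ we read the formula as its limit $2C_j^i/(\tau_j^2(t-j))$.

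\textbf{Step 2: the two bounds.} By Lemma \ref{lemme_S}, conditionally on $\mathcal{F}_j^i$ and the jumps on $[j,t]$, $C_t^i$ has the law of $S^{i,0}_t+\sum_\ell S^{i,\ell}_t$, a sum of non-negative terms. Hence $\{C_t^i=0\}$ requires $S^{i,0}_t=0$, which gives the upper bound. For the lower bound, $\{C_t^i=0\}$ contains the event that there are no jumps on $[j,t]$ and $S^{i,0}_t=0$. The number of jumps $\Pi^i_t-\Pi^i_j$ is independent of $\mathcal{F}_j^i$ and of the Brownian motion, and it is Poisson with mean $\lambda_{j+1}E_i(t-j)$. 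Multiplying $e^{-\lambda_{j+1}E_i(t-j)}$ by the Step 1 probability gives the lower bound.

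\textbf{Step 3: the qualitative claims.} The lower bound is strictly positive, so $C^i$ hits $0$ with positive probability. Since $0$ is absorbing for \eqref{Ccirc}, the process stays at zero until the next jump. For continuity given positivity, we condition on the jumps. Each branch $S^{i,\ell}_t$ has a law that is an atom at $0$ plus an absolutely continuous part. This follows from the Laplace transform above: the law of $S_t$ is a compound Poisson mixture of Gamma laws, the standard CIR noncentral chi-square structure. A sum of independent such variables, restricted to $\{>0\}$, is absolutely continuous. Integrating over the jump configuration preserves this.

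The main obstacle is Step 1: solving the Riccati equation cleanly and justifying the limit $\theta\to\infty$. The limit is handled by monotone convergence, since $e^{-\theta S}\downarrow\mathbf 1_{\{S=0\}}$. The continuity claim in Step 3 relies on the known noncentral chi-square form of the CIR transition law, which I would cite rather than rederive.
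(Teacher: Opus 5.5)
Your proof is correct and follows the same route as the paper. The upper bound comes from $C_t^i$ dominating the jump-free branch started from $C_j^i$. The lower bound comes from intersecting $\{C_t^i=0\}$ with the no-jump event on $[j,t]$, which has probability $e^{-\lambda_{j+1}E_i(t-j)}$ and is independent of the diffusion.

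The differences are minor:
\begin{itemize}
\item You derive the extinction probability yourself via the Riccati equation, whereas the paper cites \cite[Remark 3.12]{baradel2025ime}.
\item You get the domination from the branching representation of Lemma~\ref{lemme_S}, not from the pathwise comparison $C_t^i \geq C_t^{i,\circ}[j,C_j^i]$.
\item You also justify the continuity-given-positivity claim, which the paper leaves unproved. You do not need to cite the noncentral chi-square law for this: your Laplace transform rewrites as $\exp(-\lambda_0\theta/(\beta_0+\theta))$, which is already the compound Poisson--exponential law of Proposition~\ref{loiS}.
\end{itemize}
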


\begin{proof}
Since $Z \geq 0$, we have $C_t \geq C_{t}^{i, \circ}[j, C_{j}^i]$ for $j \leq t \leq j+1$, where $C^{i, \circ}[j, C_{j}^i]$ satisfies equation \eqref{Ccirc}. The upper bound therefore follows directly from \cite[Remark 3.12]{baradel2025ime}.

\smallbreak

For the lower bound, note that
    \[
        \mathbb{P}(C_{t}^i = 0 \mid \mathcal{F}_{j}^i) \geq \mathbb{P}(\left\{C_{t}^i = 0\right\} \cap \left\{\Pi^i\left([j, t], \mathbb{R}_+^*\right) = 0\right\} \mid \mathcal{F}_{j}^i).
    \]
    combined, again, with \cite[Remark 3.12]{baradel2025ime}.
    \[
        \mathbb{P}(\left\{C_{t}^i = 0\right\} \cap \left\{\Pi^i\left([j, t], \mathbb{R}_+^*\right) = 0\right\} \mid \mathcal{F}_{j}^i) \leq \mathbb{P}(C_{t}^i = 0 \mid \mathcal{F}_{j}^i) \leq \mathbb{P}(C_{t}^{\circ, i}[j, C_{j}^{i}] = 0 \mid \mathcal{F}_{j}^i)
    \]
and 
    \[
    \mathbb{P}(\left\{C_{t}^i = 0\right\} \cap \left\{\Pi^i\left([j, t], \mathbb{R}_+^*\right) = 0\right\} = \exp\left(-\lambda_{j+1} E_i (t-j)\right)\mathbb{P}(C_{t}^{\circ, i}[j, C_{j}^{i}] = 0 \mid \mathcal{F}_{j}^i)
    \]
    From \cite[Remark 3.12]{baradel2025ime},
    \[
        \mathbb{P}(C_{t}^{\circ, i}[j, C_{j}^{i}] = 0 \mid \mathcal{F}_{j}^i) = \exp\left(-\frac{2\delta_{j}e^{-\delta_{j}(t-j)}C_j^i}{\tau_j^2\left(1 - e^{-\delta_{j}(t-j)}\right)}\right),
    \]
    which leads to the result.
\end{proof}

Unlike in \cite{baradel2025ime}, the distribution of the jump size $Z$ is only constrained by the fixed ratio of its first two moments (see Remark \ref{X_ratio}). As a result, we cannot in general obtain a closed-form expression for the full conditional distribution of $C_t^i$. Nevertheless, exact simulation remains possible.

\begin{proposition}\label{loiS}
    Let $(z_{\ell})_{\ell \geq 0}$ and $(t_{\ell})_{\ell \geq 0}$ be two sequences of non-negative real numbers. For $j \leq t \leq j+1$, define $(\lambda_{\ell})_{\ell \geq 0}$ and $(\beta_{\ell})_{\ell \geq 0}$ by:
    	\[
		\lambda_\ell := \frac{-2\delta_{j}e^{-\delta_{j}(t-t_{\ell})}z_{\ell}}{\tau_j^2\left(e^{-\delta_{j}(t-t_{\ell})} - 1\right)}, \quad \beta_\ell := \frac{-2\delta_{j}}{\tau_j^2\left(e^{-\delta_{j}(t-t_{\ell})} - 1\right)}.
	\]
    For each $\ell \geq 0$, let $S_{\ell}[t_{\ell}, z_{\ell}]$ be the independent random variable
    	\[
    	S_t^{i, \ell}[t_{\ell}, z_{\ell}] := \sum_{k=1}^{M_{\ell}} X_k^{\ell},
    	\]
where $M_{\ell} \sim \mathcal{P}(\lambda_\ell)$ and $(X_k^{\ell})_{k \geq 1} \overset{i.i.d.}{\sim} \mathcal{E}(\beta_{\ell}) $, independent of the Poisson process $\Pi$.

\smallbreak

Then, for $j \leq t \leq j+1$
    \[
        C_{t}^{i} \mid \sigma(\mathcal{F}_{j}^{i}, \sigma(\Pi^i([j, t], \cdot)) \overset{law}{=} S_t^{i,0}[j, C_j^i] + \sum_{\ell=1}^{\Pi_{t}^i-\Pi_{j}^i}S_t^{i, \ell}[\tau_{\ell}, Z_{\Pi_j^i + \ell}^i]
    \]
    where $j < \tau_{\ell}^i \leq t$ are the jump times of $\Pi^i$ in the interval $[j, t]$.
\end{proposition}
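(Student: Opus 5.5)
The proof will reduce the statement to Lemma \ref{lemme_S} together with the explicit law of the jump-free process \eqref{Ccirc} on one development year. Apply Lemma \ref{lemme_S} with $s=j$. It gives that, conditionally on $\sigma(\mathcal{F}_j^i,\sigma(\Pi^i([j,t],\cdot)))$, the variable $C_t^i$ has the law of $S^{i,0}_t[j,C_j^i]+\sum_{\ell}S^{i,\ell}_t[\tau_\ell,Z^i_{\Pi^i_j+\ell}]$. Here the $S^{i,\ell}$ are independent solutions of \eqref{Ccirc}, each driven by its own Brownian motion. Under this conditioning the starting points $z_\ell$ and starting times $t_\ell$ are fixed numbers: $t_0=j$, $z_0=C_j^i$, and $t_\ell=\tau_\ell$, $z_\ell=Z^i_{\Pi^i_j+\ell}$ for $\ell\ge1$. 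It therefore remains to show one thing. For a deterministic start $z\ge 0$ at a time $t_\ell\in[j,t]$, the solution of \eqref{Ccirc} at time $t$ must have the law of the compound Poisson--exponential sum with parameters $\lambda_\ell$ and $\beta_\ell$.

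On $[t_\ell,t]\subset[j,j+1]$, Assumption \ref{H_constant} makes the coefficients constant, equal to $\delta_j$ and $\tau_j$. So \eqref{Ccirc} is a Feller (CIR without mean reversion level) diffusion $dY=-\delta_jY\,du+\tau_j\sqrt{Y}\,dW$. I will compute its Laplace transform. I seek $\mathbb{E}[e^{-pY_t}\mid Y_{t_\ell}=z]=e^{-z\phi(t-t_\ell,p)}$, where $\phi$ solves the Riccati equation $\partial_r\phi=-\delta_j\phi-\tfrac{\tau_j^2}{2}\phi^2$ with $\phi(0,p)=p$. To justify this, apply Itô's formula to $e^{-\phi(t-u,p)Y_u}$. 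It is a bounded local martingale, hence a true martingale, using the affine structure and that $\phi\ge0$. Solving the Riccati equation explicitly gives
\[
\phi(r,p)=\frac{p\,e^{-\delta_j r}}{1+\frac{\tau_j^2 p}{2\delta_j}(1-e^{-\delta_j r})}.
\]
Alternatively, this can be quoted from \cite{baradel2025ime}, which also gave Remark 3.12 on the mass at zero.

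Next I compare with the compound sum. For $\sum_{k=1}^{M}X_k$ with $M\sim\mathcal{P}(\lambda)$ and $X_k\sim\mathcal{E}(\beta)$, the Laplace transform is $\exp(-\lambda p/(\beta+p))$. Write $\beta_\ell=2\delta_j/(\tau_j^2(1-e^{-\delta_j r}))$ with $r=t-t_\ell$; this matches the sign convention in the statement. Then $\lambda_\ell=z\,e^{-\delta_j r}\beta_\ell$, and a direct computation shows $z\phi(r,p)=\lambda_\ell p/(\beta_\ell+p)$. The laws therefore coincide, by injectivity of the Laplace transform on $\mathbb{R}_+$. As a consistency check, the atom at zero is $e^{-\lambda_\ell}$, which matches the formula in Lemma \ref{PC0}. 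The degenerate case $\delta_j=0$ will be handled by the limit $\beta_\ell=2/(\tau_j^2 r)$, and I will note it. Finally, the branch copies $S^{i,\ell}$ are independent with the required marginal laws, and they are independent of $\Pi$ by construction. So the conditional law of the sum is the convolution, which is exactly the law of the claimed sum of independent compound variables.

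The main obstacle is the Laplace-transform step. It requires verifying that the exponential-affine process is a genuine martingale up to time $t$ and getting the Riccati solution and parameter identification exactly right, including sign conventions. I would first try simply citing the corresponding distributional result for \eqref{Ccirc} in \cite{baradel2025ime}. Failing that, I would do the Itô/Riccati computation above, using boundedness of $e^{-\phi Y}$ to avoid any localization issue.
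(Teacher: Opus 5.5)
Your proposal is correct and follows the paper's route. The paper also takes the branching decomposition and then cites \cite[Lemma 3.13]{baradel2025ime} for the compound Poisson--exponential law of each branch; it writes Lemma~\ref{cND} there, but the decomposition it needs is Lemma~\ref{lemme_S}, which is the one you correctly invoke. Your optional Riccati/Laplace-transform derivation of that cited law is also sound: the identity $\lambda_\ell = z_\ell e^{-\delta_j r}\beta_\ell$ gives exactly $z_\ell\phi(r,p)=\lambda_\ell p/(\beta_\ell+p)$.
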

\begin{proof}
From Lemma \ref{cND} we already have the decomposition involving the independent processes $(S^{i, \ell})_{\ell \geq 0}$. Each $S^{i, \ell}$ satisfies the same SDE as in Lemma \ref{cND} (started at time $t_\ell$ from level $z_\ell$), and its conditional law at time $t$ is given explicitly in \cite[Lemma 3.13]{baradel2025ime}.
\end{proof}

\begin{remarque}\label{loiSr}
    To simulate $C_{j+1}^{i}$ conditionally on $C_{j}^{i}$, two different strategies can be considered. In both approaches, the jump times and jump sizes $(\tau_{\ell}, Z_{\ell})_{1 \leq \ell \leq \Pi_{j+1}^i - \Pi_j^i}$ are first simulated.

    \smallbreak

    \noindent \textbf{Method 1.} Simulate the process jump by jump following the SDE \eqref{EDS_C}:
    \[
    \begin{aligned}
        C_{\tau_\ell-}^{i} &= C_{\tau_\ell}^{i, \circ}[\tau_{\ell-1}, C_{\tau_{\ell-1}}^i] \\
        C_{\tau_\ell}^{i} &= C_{\tau_\ell-}^{i} + Z_{\ell},
    \end{aligned}
    \]
    \smallbreak

    \noindent \textbf{Method 2.} Use the explicit representation given in Proposition \eqref{loiS}:
    \[
     C_{j+1}^{i} = S_{j+1}^{i,0}[j, C_j^i] + \sum_{\ell=1}^{\Pi_{j+1}^i-\Pi_{j}^i}S_{j+1}^{i, \ell}[\tau_{\ell}, Z_{\Pi_j^i + \ell}^i]
    \]
    the sequences $(N_{j}^{i})$ and $(D_{j}^{i})$ can be identified using Remark \ref{rND}.
\end{remarque}

In Remark \ref{loiSr}, the first method does not give the $(N_{j}^i)$ and $(D_{j}^i)$ but only the $C_j^i$.

\bigbreak

It remains to specify the distribution of the jump sizes $Z$. Observe that, in Lemma \ref{small}, the parameters $(\lambda_j)$ depend on $\mathbb{E}[Z]$. With only aggregated data (as is typically the case in classical reserving triangles), no further information about $Z$ is available, so the choice of its distribution remains partially free (subject to moment constraints). 

\smallbreak

When individual claim data are available, one can additionally exploit the number of new claims occurring in each development year $j$ to estimate the $(\lambda_j)$ parameters separately for each development period. The distribution of $Z$ can then be estimated from the amounts of all new claims across all development years (i.e., treating $Z$ as independent of the development delay $j$). 

\smallbreak

From Remark \ref{X_ratio}, the ratio $\frac{\mathbb{E}[Z^2]}{\mathbb{E}[Z]}$ is already identifiable and can be estimated from the data. We now provide a concrete example using a gamma distribution for $Z$, which will later serve to illustrate the proposed bootstrap methodology.

\begin{exemple}\label{gamma}
Assume $Z \sim \mathcal{G}(\alpha, \beta)$. Let $\widehat{X}$ be an estimator of $\frac{\mathbb{E}[Z^2]}{\mathbb{E}[Z]}$ obtained from Remark \ref{X_ratio}. After fixing (or estimating) $\mathbb{E}[Z]$, the first two moments of the gamma distribution can be matched by setting
\[
	\widehat{\alpha} = \frac{\mathbb{E}(Z)}{\widehat{X} - \mathbb{E}(Z)}, \quad \widehat{\beta} = \frac{1}{\widehat{X} - \mathbb{E}(Z)},
\]
provided that $0 < \mathbb{E}(Z) < \widehat{X}$.
\end{exemple}

\section{The bootstrap methodology}\label{C_bootstrap}

The bootstrap procedure consists of two main classical steps:

\begin{enumerate}
    \item Bootstrapping the model parameters ($\Lambda_j$, $\Delta_j$, $T_j$, $\Sigma_j$) to account for \emph{estimation error};
    \item Simulating the unobserved future increments in the lower part of the triangle using the bootstrapped parameters, thereby incorporating \emph{process error}.
\end{enumerate}
We adapt the bootstrap approach described in \cite{england2006predictive} to the present continuous-time setting. We denote by $\mathbb{P}_{Z}$ the true distribution of $Z$, and by $\widehat{\mathbb{P}}_{Z}$ its estimated version (as illustrated, for example, in Example \ref{gamma}).

\medbreak

\noindent \textbf{1 - Bootstrapping the parameters.} For each $(i, j)$ such that $i + j \leq  n$,
we simulate $(N_{j+1}^{i, m}, D_{j+1}^{i, m})_{1 \leq m \leq M}$ conditionally on the observed $C_j^i$, using Remark \ref{loiSr} with the estimated parameters $(\widehat{\lambda}_{j+1}, \widehat{\delta}_{j}, \widehat{\tau}_{j})_{1 \leq j \leq n-1}$ and the distribution $\widehat{\mathbb{P}}_Z$.

\medbreak

We obtain the bootstrapped estimators $(\widehat{\Lambda}_{j}^m, (\widehat{\Sigma}_{j}^m))$ and $(\widehat{\Delta}_{j}^m, \widehat{T}_{j}^m)$ defined as, for all $1 \leq m \leq M$:

\begin{equation}
    \begin{aligned}
    \widehat{\Delta}_{j}^m &:= \frac{\sum_{i = 1}^{n - j}D_{j+1}^{i, m}}{\sum_{i = 1}^{n - j}C_{j}^i}, & 1 \leq j \leq n-1, \\
    (\widehat{T}_{j}^{m})^2 &:= \frac{1}{n-j-1}\sum_{i = 1}^{n-j}C_{j}^i\left(\frac{D_{j+1}^{i, m}}{C_{j}^i} - \widehat{\Delta}_{j}^m\right)^{2}, & 1 \leq j \leq n-2,\\
    \widehat{\Lambda}_{j}^m &:= \frac{\sum_{i = 1}^{n - j+1}N_{j}^{i, m}}{\sum_{i = 1}^{n - j+1}E_{i}}, & 1 \leq j \leq n, \\
    (\widehat{\Sigma}_{j}^{m})^2 &:= \frac{1}{n-j}\sum_{i = 1}^{n-j+1}E_{i}\left(\frac{N_{j}^{i, m}}{E_{i}} - \widehat{\Lambda}_{j}^m\right)^{2}, & 1 \leq j \leq n-1.
    \end{aligned}
\end{equation}
We then derive $(\widehat{\lambda}_{j}^m, \widehat{\delta}_{j}^m, \widehat{\tau}_{j}^m)$ using Lemma \ref{small}, and $\widehat{\mathbb{P}}_Z^m$ using Remark \ref{X_ratio} and, e.g., Example \ref{gamma}. 

\noindent \textbf{2 -  Bootstrapping the process error.} For each $(i, j)$ such that $i+j > n+1$, for all $i > 1$, we simulate $C_{j}^{i, m}$ using Remark \ref{loiSr} with the bootstrapped parameters $(\widehat{\lambda}_{j}^m, \widehat{\delta}_{j}^m, \widehat{\tau}_{j}^m)$ and the bootstraped distribution $\widehat{\mathbb{P}}_Z^m$ iteratively on $n-i+1 \leq j \leq n$ starting from $C^{i, m}_{n-i+1} := C^{i}_{n-i+1}$.

\noindent \textbf{3 - The bootstrapped reserve.} This yields the bootstrapped simulations of the reserves:
\begin{equation}\label{r_bootstrap}
	R^m := \sum_{i=2}^{n} C_{n}^{i, m} - C_{n-i+1}^{i}, \quad 1 \leq m \leq M.
\end{equation}
The vector $(R^m)_{1 \leq m \leq M}$ approximates the distribution of the reserves, conditional on the observed data.

\section{Numerical example}

We implement our bootstrap method within a continuous-time framework and apply it to the dataset from \cite{schnieper1991separating}, which originates from a real-world motor third-party liability excess-of-loss pricing problem. The data include the triangles $N$ and $D$ shown in Table~\ref{triangleND}, as well as the exposure values and the triangle $C$ presented in Table~\ref{triangleEC}.

        \begin{table}[H]
        \begin{center}\setlength{\tabcolsep}{1.4mm}
        \fontsize{9pt}{12pt}\selectfont
        \begin{tabular}{|c|c c c c c c c|}
  \hline
  $i$ & $N_{i,1}$ & $N_{i,2}$ & $N_{i,3}$ & $N_{i,4}$ & $N_{i,5}$ & $N_{i,6}$ & $N_{i,7}$ \\
  \hline
1 & 7.5 & 18.3 & 28.5 & 23.4 & 18.6 & 0.7 & 5.1\\
2 & 1.6 & 12.6 & 18.2 & 16.1 & 14 & 10.6 &\\
3 & 13.8 & 22.7 & 4 & 12.4 & 12.1 & &\\
4 & 2.9 & 9.7 & 16.4 & 11.6 & & &\\
5 & 2.9 & 6.9 & 37.1 & & & &\\
6 & 1.9 & 27.5 & & & & &\\
7 & 19.1 & & & & & &\\
  \hline
\end{tabular}\quad \quad \quad\begin{tabular}{|c|c c c c c c c|}
  \hline
  $i$ & $D_{i,1}$ & $D_{i,2}$ & $D_{i,3}$ & $D_{i,4}$ & $D_{i,5}$ & $D_{i,6}$ & $D_{i,7}$ \\
  \hline
1 & 0 & -3.1 & 4.8 & -8.5 & 23 & 3.9 & 2.5 \\
2 & 0 & -0.6 & 0.9 & 8.6 & -1.4 & 5.6 & \\
3 & 0 & -5.9 & 10.1 & -4.6 & -31.1 & & \\
4 & 0 & -1.4 & -2.1 & -2.8 & & & \\
5 & 0 & 0 & -5.8 & & & & \\
6 & 0 & 0 & & & & & \\
7 & 0 & & & & & & \\
  \hline
\end{tabular}
\end{center}
\caption{The $N$ (left) and $D$ (right) triangles from the dataset in \cite{schnieper1991separating}.\label{triangleND}}
\end{table}

        \begin{table}[H]
        \begin{center}\setlength{\tabcolsep}{1.4mm}
        \fontsize{9pt}{12pt}\selectfont
        \begin{tabular}{|c|c|}
  \hline
  $i$ & $E_{i}$ \\
  \hline
1 & 10224\\
2 & 12752\\
3 & 14875\\
4 & 17365\\
5 & 19410\\
6 & 17617\\
7 & 18129\\
  \hline
\end{tabular}\quad \quad \quad\begin{tabular}{|c|c c c c c c c|}
  \hline
  $i$ & $C_{i,1}$ & $C_{i,2}$ & $C_{i,3}$ & $C_{i,4}$ & $C_{i,5}$ & $C_{i,6}$ & $C_{i,7}$ \\
  \hline
1 & 7.5 & 28.9 & 52.6 & 84.5 & 80.1 & 76.9 & 79.5\\
2 & 1.6 & 14.8 & 32.1 & 39.6 & 55 & 60 &\\
3 & 13.8 & 42.4 & 36.3 & 53.3 & 96.5 & &\\
4 & 2.9 & 14 & 32.5 & 46.9 & & &\\
5 & 2.9 & 9.8 & 52.7 & & & &\\
6 & 1.9 & 29.4 & & & & &\\
7 & 19.1 & & & & & &\\
  \hline
\end{tabular}
\end{center}
\caption{The exposure $E$ (left) and cumulative claims triangle $C$ (right) from the dataset in \cite{schnieper1991separating}.\label{triangleEC}}
\end{table}

To identify the distribution of $Z$, we first estimate the ratio $\frac{\mathbb{E}(Z^2)}{\mathbb{E}(Z)}$. Following Remark~\ref{X_ratio}, this quantity is estimated via a weighted linear regression through the origin (i.e., with no intercept term). We regress $(\Sigma_j^2 - A_j)$ on $(B_j)$ with the noise $(\varepsilon_j)$ :

\[
\Sigma_j^2 - A_j = B_j \times \frac{\mathbb{E}(Z^2)}{\mathbb{E}(Z)} + \varepsilon_j,
\]
using the pairs $(A_j, B_j)$ obtained from Lemma~\ref{small}, and with weights $w_j := n - j$.

Figure~\ref{fig:regs} illustrates this regression. The left panel (Figure~\ref{fig:reg}) shows the observed points $(B_j, \Sigma_j^2 - A_j)$ together with the fitted line. The right panel (Figure~\ref{fig:sreg}) compares the original $\Sigma_j$ values with the estimated values $\widehat{\Sigma}_j$ obtained from the regression (note that $\Sigma_n := 0$ and receives zero weight in the regression). Table~\ref{tab:reg} reports the estimation results.\\

\begin{figure}[H]
    \centering
    
    \begin{subfigure}[b]{0.48\textwidth}
        \centering
        \includegraphics[width=\textwidth]{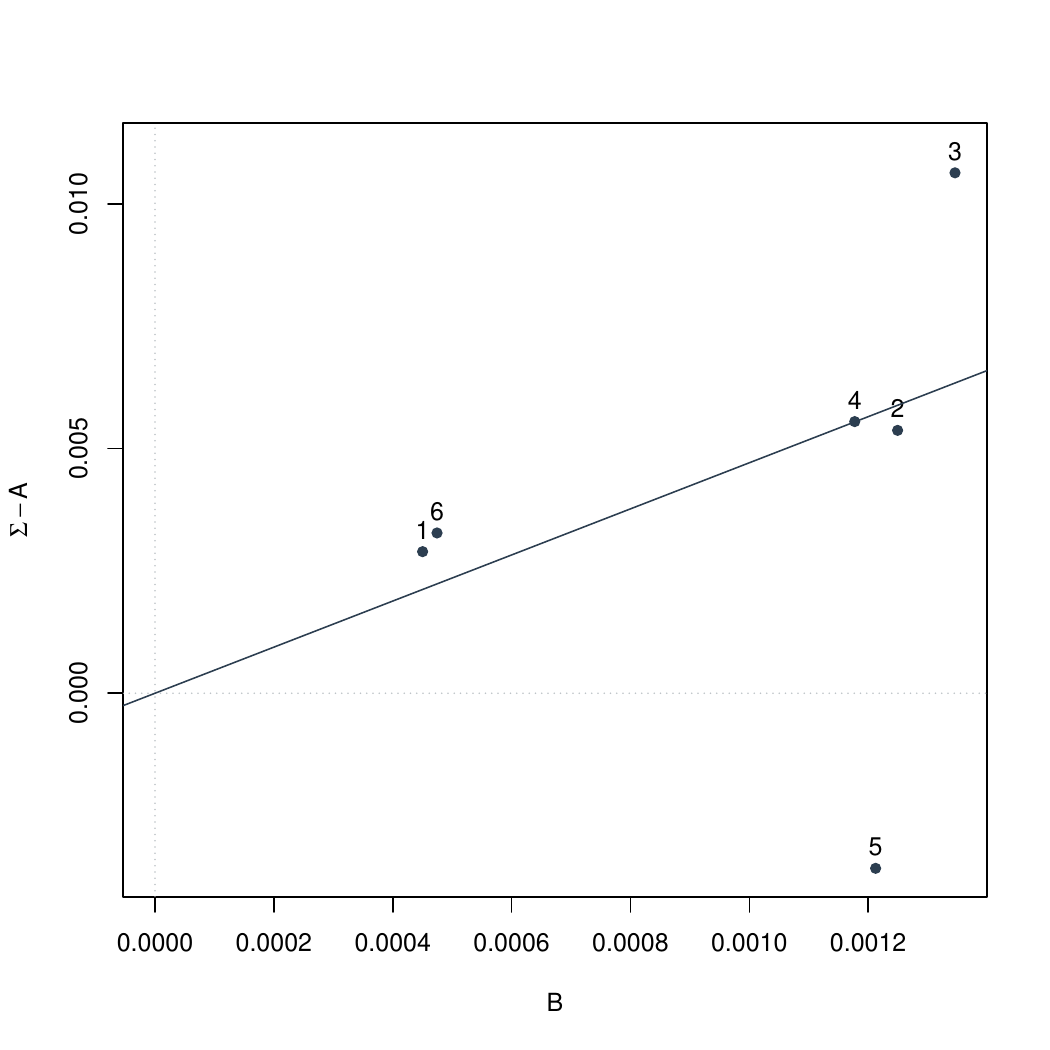}
        \caption{$\Sigma_j^2 - A_j$ versus $B_j$ with indices $j$}
        \label{fig:reg}
    \end{subfigure}
    \hfill 
    \begin{subfigure}[b]{0.48\textwidth}
        \centering
        \includegraphics[width=\textwidth]{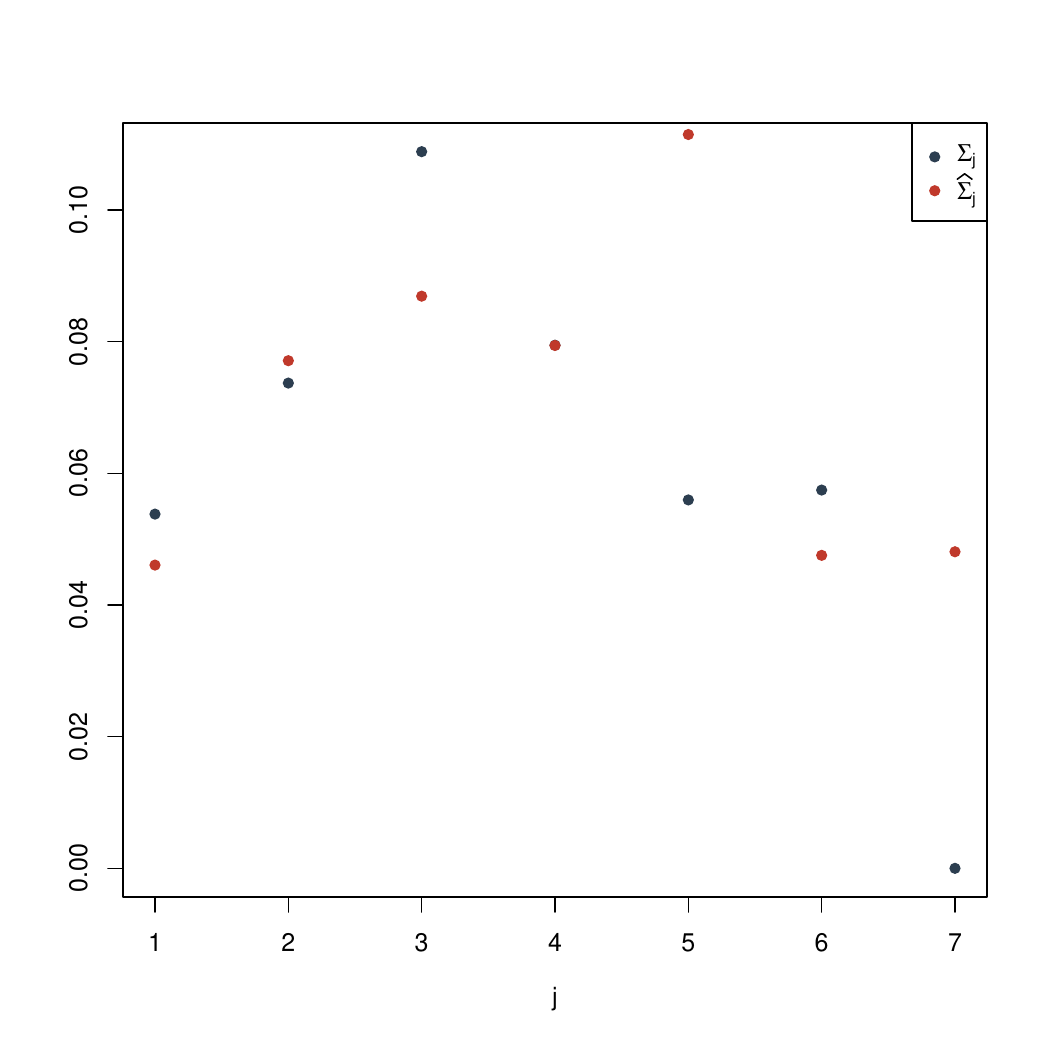}
        \caption{Comparison between $(\Sigma_j)$ and $(\widehat{\Sigma}_j)$}
        \label{fig:sreg}
    \end{subfigure}
    
    \caption{Illustration of the weighted linear regression for $\Sigma$}
    \label{fig:regs}
\end{figure}

\begin{table}[H]
        \begin{center}\setlength{\tabcolsep}{1.4mm}
        \fontsize{9pt}{12pt}\selectfont
        \begin{tabular}{|c|c|c|}
  \hline
  $\widehat{\frac{\mathbb{E}(Z^2)}{\mathbb{E}(Z)}}$ & $p$-value & $R^2$ \\
  \hline
4.7120 & 0.0235 & 0.6747\\
  \hline
\end{tabular}
\end{center}
\caption{Estimation of $\frac{\mathbb{E}(Z^2)}{\mathbb{E}(Z)}$ following Remark \ref{X_ratio}.\label{tab:reg}}
\end{table}

From Lemma \ref{small}, we compute the $(\lambda_{j}\mathbb{E}[Z])$ for each development period $j$ presented in Table \ref{tab:lambdaEZ}. These quantities are independent of the value of $\mathbb{E}[Z]$ itself.

\begin{table}[H]
        \begin{center}\setlength{\tabcolsep}{1.4mm}
        \fontsize{9pt}{12pt}\selectfont
        \begin{tabular}{|c|c|}
  \hline
  $j$ & $\lambda_{j}\mathbb{E}[Z]$  \\
  \hline
1 & $0.4502954 \times 10^{-3}$\\
2 & $0.9048361 \times 10^{-3}$\\
3 & $1.4490241 \times 10^{-3}$\\
4 & $1.1235202 \times 10^{-3}$\\
5 & $1.1504111 \times 10^{-3}$\\
6 & $0.5099654 \times 10^{-3}$\\
7 & $0.5071148 \times 10^{-3}$\\
\hline
\end{tabular}
\end{center}
\caption{Estimation of $\lambda_{j}\mathbb{E}(Z)$ which is independent of $\mathbb{E}(Z)$, following Lemma \ref{small}.\label{tab:lambdaEZ}}
\end{table}

No additional information is provided in Schnieper's data that would allow us to directly determine a specific value for $\mathbb{E}(Z)$. Once $\mathbb{E}(Z)$ is fixed, the implied average annual number of new claims ranges from a minimum of $\frac{4.604}{\mathbb{E}(Z)}$ to a maximum of $\frac{28.13}{\mathbb{E}(Z)}$. Using Table~\ref{tab:reg}, the variance of $Z$ can be expressed as

	\[
		Var(Z) = \mathbb{E}[Z]\left(\widehat{\frac{\mathbb{E}[Z^2]}{\mathbb{E}[Z]}} - \mathbb{E}[Z]\right) = \mathbb{E}[Z](4.7120 - \mathbb{E}[Z]).
	\]
Moreover, we must also specify a full distribution for $Z$. We choose the gamma distribution, the parameters are then obtained as follows:
    \[
        Z \sim \mathcal{G}(\widehat{\alpha}, \widehat{\beta}), \quad \widehat{\beta} = \frac{1}{4.7120 - \mathbb{E}(Z)}, \quad \widehat{\alpha} = \mathbb{E}(Z)\widehat{\beta}.
    \]

\subsection{Schnieper's model with a parameterized Log-normal distribution}

We employ the classic estimator
	\[
\widehat{C}_{i,n} := \left(\prod_{k = n+1-i}^{n-1}(1-\widehat{\Delta}_{k})\right)C_{i,n+1-i} + E_{i}\sum_{k = n+2-i}^{n}\widehat{\Lambda}_{k}\left(\prod_{\ell = k}^{n-1}(1-\widehat{\Delta}_{\ell})\right),
\]
which leads to the estimation of the expected value for the total reserve:
    \[
     \widehat{\mu}_{R} := \widehat{R}.
    \]
We denote by $\widehat{\sigma}^{2}_{R}$ the conditional MSEP of \cite{liu2009predictive}. Finally, we approximate the distribution of the reserve with a Log-normal distribution by matching the moments:
    \[
        \mathcal{LN}\left(\log(\widehat{\mu}_{R}) - \frac12 \log\left(1+\frac{\widehat{\sigma}^{2}_{R}}{\widehat{\mu}_{R}}\right) , \log\left(1+\frac{\widehat{\sigma}^{2}_{R}}{\widehat{\mu}_{R}}\right)\right).
    \]

\subsection{Schnieper's model with Bootstrap}\label{M_bootstrap}

For comparison purposes, we calculate both the conditional MSEP and its distribution using a bootstrap method, where the conditional MSEP is the variance of the bootstrap distribution.

\medbreak

We employ the procedure outlined in \cite{liu2009bootstrap}, which we briefly summarize here. First, we compute the Pearson residuals $(r_{i,j})$ from the triangle $N$ and $(s_{i,j})$ from the triangle $D$.

\medbreak

\noindent \textbf{1.} We simulate:
    \[
        \begin{aligned}
        N_{i,j}^m &:= \widehat{\Lambda}_{j}E_{i} + \widehat{\Sigma}_{j}\sqrt{E_{i}}r_{i,j}^m, \quad &1 \leq m \leq M,\\
        D_{i,j+1}^m &:= \widehat{\Delta}_{j}C_{i,j} + \widehat{T}_{j}\sqrt{C_{i,j}}s_{i,j}^m, \quad &1 \leq m \leq M,
        \end{aligned}
    \]
    where each $r_{i,j}^m$ and $s_{i,j}^m$ is drawn independently and uniformly from the set of observed Pearson residuals $(r_{i,j})$ and $(s_{i,j})$. Using \eqref{est_fs}, we compute $(\widehat{\Lambda}_{j+1}^m, \widehat{\Sigma}_{j+1}^m, \widehat{\Delta}_{j}^m, \widehat{T}_{j}^m)_{1 \leq j \leq n-1}$ for $1 \leq m \leq M$.

\medbreak

\noindent \textbf{2.}  We initiate the simulation with $C_{i,n-i+1}^{m} := C_{i,n-i+1}$, and then iteratively simulate the lower triangle for $2 \leq i \leq n$ as follows:
        \[
            C_{i,j+1}^m \sim \mathcal{N}\left(\widehat{\Lambda}_{j+1}^{m}E_{i} + (1-\widehat{\Delta}_{j}^{m})C_{i,j}^m,\, (\widehat{\Sigma}_{j+1}^{m})^{2}E_{i} + (\widehat{T}_{j}^{m})^{2}C_{i,j}^m\right),
        \]
        and we deduce the bootstrap distribution of the total reserve with the formula (\ref{r_bootstrap}).

\subsection{Time series with Bootstrap}

Following the idea of \cite{buchwalder2006mean} for Mack's model, we propose a Time series model that fits Schnieper's assumption. 

\begin{equation*}
\begin{aligned}
    N_{i,j} &=  \zeta_{i,j}, \\
    D_{i, j+1} &= \Delta_{j} C_{i,j} + T_{j} \sqrt{C_{i,j}} \varepsilon_{i,j}
    \end{aligned}
\end{equation*}
where the $(\varepsilon_{i,j})$ are centered with unit variance and $\zeta_{i,j}$ are independent such that $\mathbb{E}(\zeta_{i,j}) = \lambda_j E_i$ and $Var(\zeta_{i,j}) = \sigma_j^2 E_i$. Thus, $N$ is a simple independent process while $D$ is a simple Time series. We introduce the following hypothesis:
\[
    \begin{aligned}
	(\varepsilon_{i,j})_{1 \leq i, j \leq n} &\overset{i.i.d.}{\sim} \mathcal{N}(0, 1),\\
	(\zeta_{i,j})_{1 \leq i, j \leq n} & \sim \mathcal{G}\left(\frac{\Lambda_j^2}{\Sigma_j^2}E_i, \frac{\Lambda_j}{\Sigma_j^2}\right).
    \end{aligned}
\]
Now, we describe the bootstrap method for this model.

\medbreak

\noindent \textbf{1.} For $i + j \leq n + 1$, we simulate the upper triangle:
    \[
    \begin{aligned}
        N_{i,j}^m &\sim \mathcal{G}\left(\frac{\widehat{\Lambda}_j^2}{\widehat{\Sigma}_j^2}E_i, \frac{\widehat{\Lambda}_j}{\widehat{\Sigma}_j^2}\right), \quad &1 \leq m \leq M,\\
        D_{i,j+1}^m &\sim \mathcal{N}\left(\widehat{\Delta}_{j}C_{i,j}, \widehat{T}_{j}^{2}C_{i,j}\right), \quad &1 \leq m \leq M.
    \end{aligned}
    \]
Using \eqref{est_fs}, we derive $(\widehat{\Lambda}_{j+1}^m, \widehat{\Sigma}_{j+1}^m, \widehat{\Delta}_{j}^m, \widehat{T}_{j}^m)_{1 \leq j \leq n-1}$ for $1 \leq m \leq M$. Additionally, it is noteworthy that when $C_{i,j}$ is fixed,

\begin{equation}
    \begin{aligned}
    \widehat{\Delta}_{j}^m &\sim \mathcal{N}\left(\widehat{\Delta}_{j}, \frac{\widehat{T}^2_j}{\sum_{i=1}^{n-j}C_{i,j}}\right), \\
    (\widehat{T}_{j}^{m})^2 &\sim \widehat{T}_j^2\frac{\chi^2_{n-j-1}}{n-j-1},
    \end{aligned}
\end{equation}
and both are independent. We can simulate the $(\widehat{\Delta}_{j}^m, \widehat{T}_{j}^{m})_{1 \leq j \leq n-1}$ directly. We can also see that $\widehat{\Lambda}_{j}^m \sim \mathcal{G}\left(\frac{\widehat{\Lambda}_j^2}{\widehat{\Sigma}_j^2}\sum_{i=1}^{n-j+1}E_i, \frac{\widehat{\Lambda}_j}{\widehat{\Sigma}_j^2}\sum_{i=1}^{n-j+1}E_i\right)$ but we do not easily have the distribution of $\widehat{\Sigma}_j^m$, we need to simulate the $N_{i,j}^m$ in order to get them.

\medbreak

\noindent \textbf{2.} As in Section \ref{M_bootstrap}, we begin with $C_{i,n-i+1}^{m} := C_{i,n-i+1}$, we simulate iteratively, for $2 \leq i \leq n$ and $i + j > n+1$ the lower triangle:
        \[
            \begin{aligned}
            N_{i,j}^m \sim \mathcal{G}\left(\frac{(\widehat{\Lambda}_j^m)^2}{(\widehat{\Sigma}_j^m)^2}E_i, \frac{\widehat{\Lambda}_j^m}{(\widehat{\Sigma}_j^m)^2}\right),
            D_{i,j+1}^m \sim \mathcal{N}\left(\widehat{\Delta}_{j}^{m}C_{i,j}^m, (\widehat{T}_{j}^{m})^{2}C_{i,j}^m\right),
            \end{aligned}
        \]
        and we deduce the bootstrap distribution of the total reserve with the formula \eqref{r_bootstrap}.

\subsection{Comparison and conclusion}

Table~\ref{table_bootstrap} compares the conditional prediction of several methods, measured by the square root of the estimated conditional MSEP and the estimated 99.5\% quantile excess ($Q(R; 99.5\%) - \widehat{R}$), both expressed as percentages of the point estimate of the reserve $\widehat{R}$.

\begin{table}[H]
        \begin{center}\footnotesize
        \begin{tabular}{|c|c|c|}
  \hline
   Method & $ \sqrt{\widehat{MSEP}}$ (in \% of $\widehat{R}$) & $Q(R; 99.5\%) - \widehat{R}$ (in \% of $\widehat{R}$)  \\
  \hline
  Schnieper Log-normal & 42.9175 & 165.003 \\
  Schnieper Bootstrap & 38.1737 & 103.181 \\ 
  Time series Bootstrap & 37.1173 & 114.056 \\
  Continuous-time Bootstrap ($\mathbb{E}[Z] = 1$) & 43.1650 & 136.702\\

  \hline
\end{tabular}
\end{center}
\caption{Conditional MSEP and 99.5\% quantile excess for the three other methods introduced and the continuous-time bootstrap from Section \ref{C_bootstrap}.}\label{table_bootstrap}
\end{table}

However, adjusting the \textit{Schnieper Log-normal} to employ a Gamma distribution reduces the quantile to $144.387\%$, highlighting the critical influence of the chosen distribution.

\medbreak

In simulations, the Schnieper Bootstrap often yields negative values for $N_{i, j}$, occurring roughly in $66\%$ of the simulations, while $D_{i, j+1} > C_{i,j}$ never occurs. The Time Series Bootstrap method has by construction $N_{i, j} \geq 0$ but leads to $D_{i, j+1} > C_{i,j}$ in $6\%$ of the simulations.

\medbreak

The processes $(C_t^i)_{0 \leq t \leq n}$ remain non-negative in the continuous-time model and, in particular, $D_{j+1}^{i} \leq C_{j}^{i}$ conditional on $\mathcal{F}_{j}^i$, for all $(i, j)$. Nevertheless, in Lemma \ref{PC0} , we noted that $\mathbb{P}(D_{j+1}^{i} = C_{j}^{i} \mid \mathcal{F}_{j}^i) > 0$, on the diagonal, the highest probability arises for $j = 4$, and we have:
\[
	\mathbb{P}(D_{5}^{n-3} = C_{4}^{n-3} \mid \mathcal{F}_{4}^{n-3}) \approx 2.604 \times 10^{-4}.
\]

In Figure~\ref{fig_bootstrap}, we display the complete distributions associated with the various models and our framework.

\begin{figure}[H]
\centering
\includegraphics[scale=0.595]{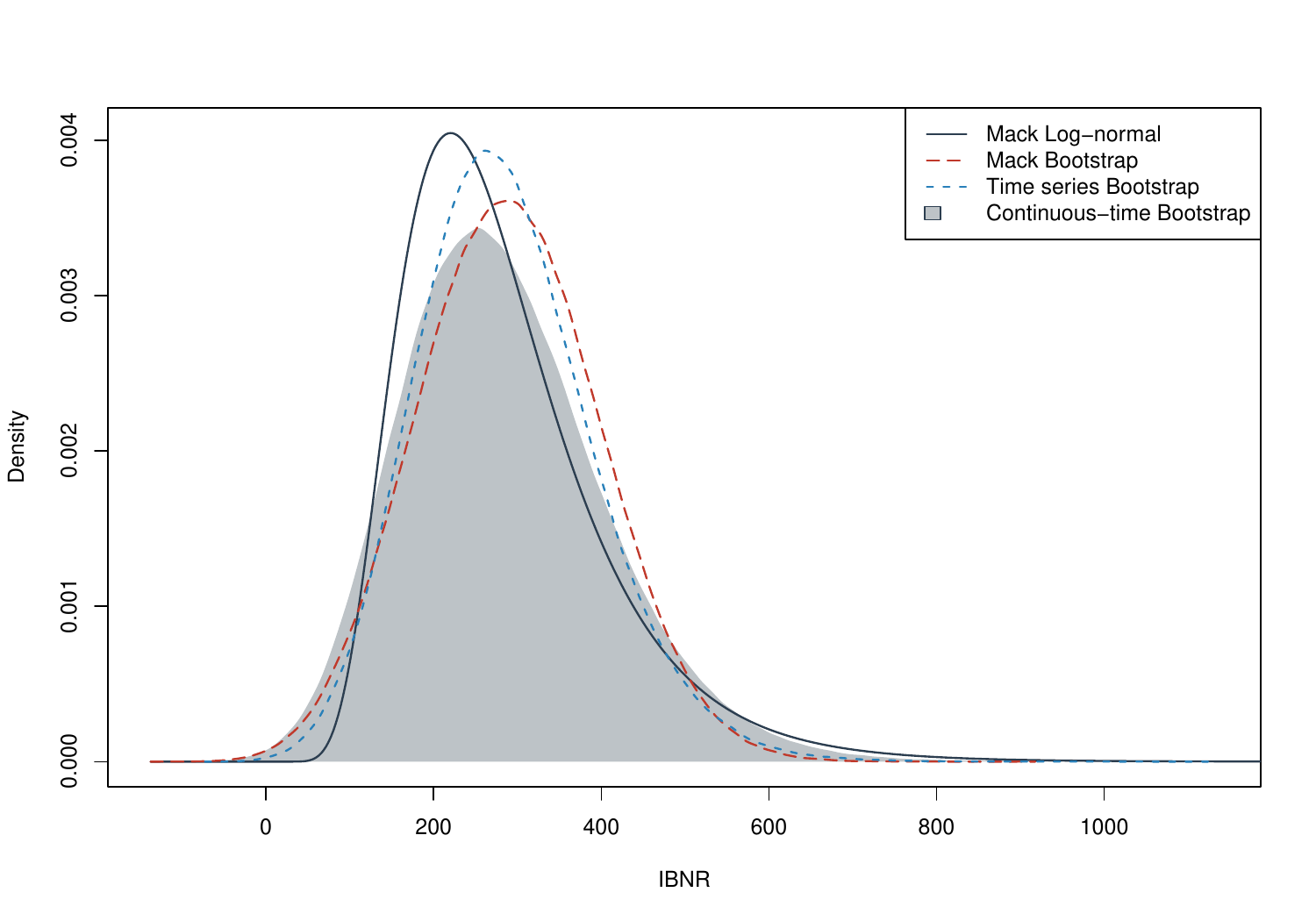}
\caption{Estimated conditional distributions of the total reserve.\label{fig_bootstrap}}
\end{figure}

\smallbreak

We now investigate the sensitivity of the results to the choice of $\mathbb{E}[Z]$, where $0 < \mathbb{E}[Z] < \widehat{\mathbb{E}[Z^2]/\mathbb{E}[Z]} \approx 4.712$. In the main analysis above, $\mathbb{E}[Z] = 1$ was used. Figure~\ref{fig:ez} displays $\sqrt{\widehat{\mathrm{MSEP}}}$ (left) and the 99.5\% quantile excess (right) as functions of $\mathbb{E}[Z]$. Both quantities exhibit noticeable sensitivity, particularly for smaller values of $\mathbb{E}[Z]$.

\smallbreak

This behaviour can be understood from the properties of the Gamma distribution. Lower values of $\mathbb{E}[Z]$ correspond to smaller shape ($\alpha$) and rate ($\beta$) parameters. The exponential moments of the Gamma distribution exist only up to order $\beta$, and the excess kurtosis is $6/\alpha$. As $\alpha$ and $\beta$ increase (i.e., as $\mathbb{E}[Z]$ becomes larger), the distribution becomes less heavy-tailed and approaches normality, which affects both the MSEP and the tail quantiles.

\smallbreak

In the absence of additional information, a natural default choice for $\mathbb{E}[Z]$ is a value that approximately recovers the conditional MSEP of the original Schnieper model. When individual claim data are available, they provide a direct way to estimate $\mathbb{E}[Z]$ and, more generally, the full distribution $\mathbb{P}_Z$.

\begin{figure}[H]
    \centering
    
    \begin{subfigure}[b]{0.48\textwidth}
        \centering
        \includegraphics[width=\textwidth]{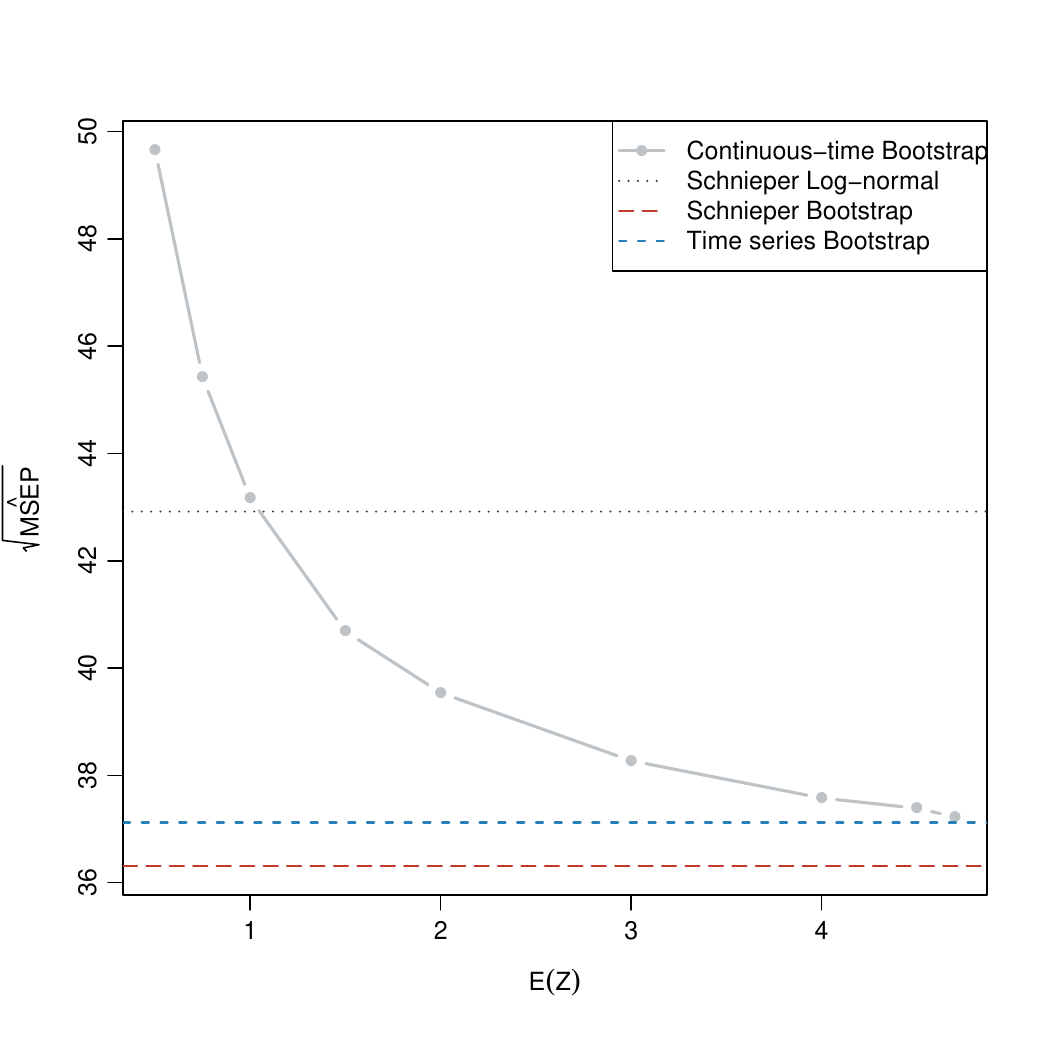}
        \caption{$\sqrt{\widehat{MSEP}}$}
        \label{fig:ez_mesp}
    \end{subfigure}
    \hfill
    \begin{subfigure}[b]{0.48\textwidth}
        \centering
        \includegraphics[width=\textwidth]{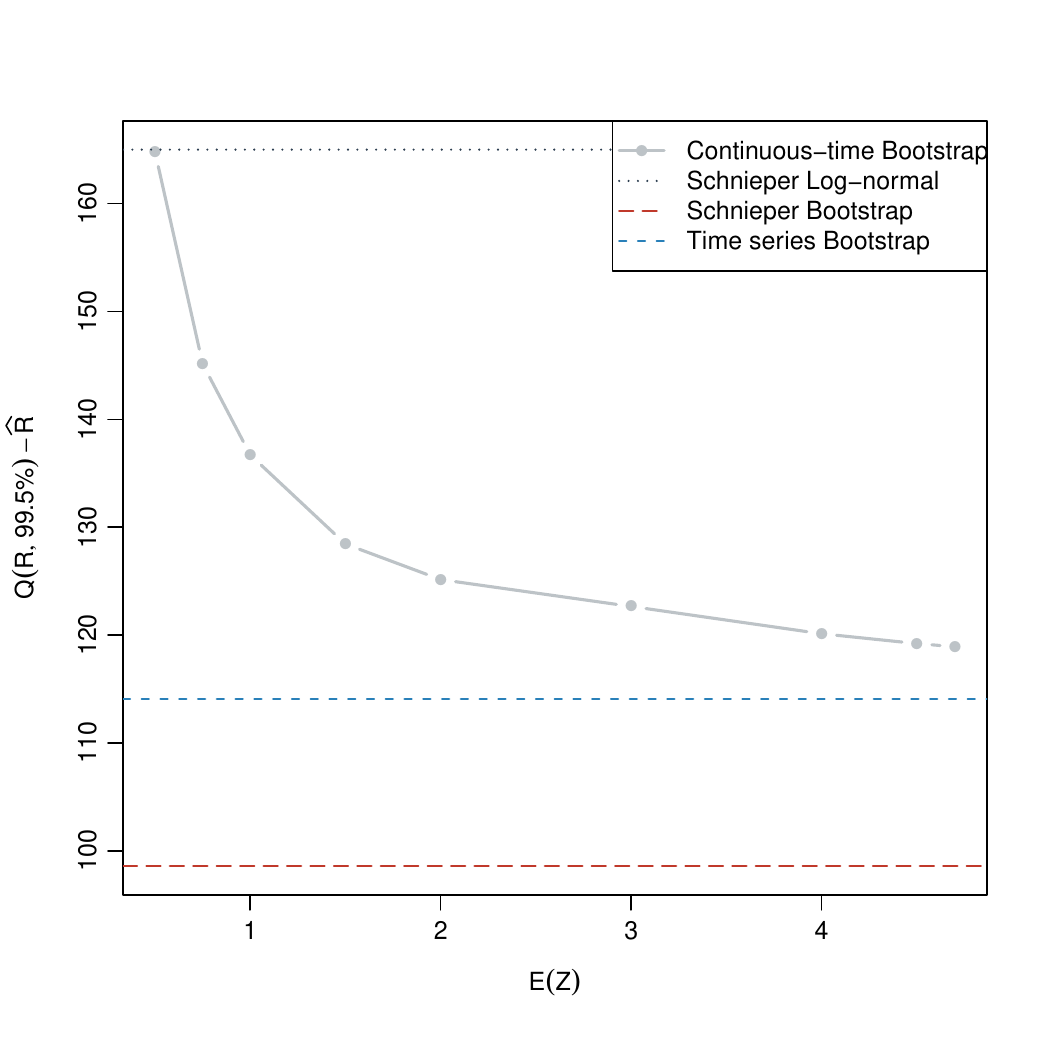}
        \caption{$Q(R; 99.5\%) - \widehat{R}$}
        \label{fig:ez_q}
    \end{subfigure}
    
    \caption{Comparison of $ \sqrt{\widehat{MSEP}}$ and $Q(R; 99.5\%) - \widehat{R}$ as a function of $\mathbb{E}(Z)$.}
    \label{fig:ez}
\end{figure}

\FloatBarrier

\section*{Acknowledgments}

The author acknowledges the financial support provided by the \emph{Fondation Natixis}.

\bibliographystyle{plain}
\bibliography{bibliographie}

\end{document}